\def\expandafter\normalsize\expandafter{%
	\normalsize
	\setlength\abovedisplayskip{2.7pt}
	\setlength\belowdisplayskip{2.7pt}
	\setlength\abovedisplayshortskip{2.7pt}
	\setlength\belowdisplayshortskip{2.7pt}
}
\newcommand{\bA}{\mbox{\boldmath{$A$}}}
\newcommand{\bB}{\mbox{\boldmath{$B$}}}
\newcommand{\be}{\mbox{\boldmath{$e$}}}
\newcommand{\bF}{\mbox{\boldmath{$F$}}}
\newcommand{\bI}{\mbox{\boldmath{$I$}}}
\newcommand{\bL}{\mbox{\boldmath{$L$}}}
\newcommand{\bQ}{\mbox{\boldmath{$Q$}}}
\newcommand{\bR}{\mbox{\boldmath{$R$}}}
\newcommand{\bT}{\mbox{\boldmath{$T$}}}
\newcommand{\bU}{\mbox{\boldmath{$U$}}}
\newcommand{\bX}{\mbox{\boldmath{$X$}}}
\newcommand{\bx}{\mbox{\boldmath{$x$}}}
\newcommand{\bY}{\mbox{\boldmath{$Y$}}}
\newcommand{\tbd}{\parallel}
\newcommand{\vpc}{\vspace{1pc}}
\newcommand{\mvpc}{\vspace{-1pc}}
\newcommand{\hvpc}{\vspace{-.5pc}}
\newcommand{\bSigma}{\mbox{\boldmath{$\Sigma$}}}
\newcommand{\bOmega}{\mbox{\boldmath{$\Omega$}}}
\theoremstyle{definition}
\newtheorem{theorem}{Theorem}[section]
\newtheorem{lemma}[theorem]{Lemma}
\theoremstyle{remark}
\newtheorem*{rem}{Remark}
\begin{document}
	
	\label{title}
	\title{Designing Sequence with Minimum PSL Using Chebyshev Distance and its Application for Chaotic MIMO Radar Waveform Design}
	
	\author{
		Hamid Esmaeili Najafabadi \IEEEmembership{Student Member, IEEE}\thanks{Copyright (c) 2015 IEEE. Personal use of this material is permitted. However, permission to use this material for any other purposes must be obtained from the IEEE by sending a request to pubs-permissions@ieee.org.
			
			DOI: \href{http://ieeexplore.ieee.org/document/7707413/citations}{10.1109/TSP.2016.2621728}
			
			Hamid Esmaeili Najafabadi is currently a Ph.D. student at the Department of Electrical Engineering, University of Isfahan, Hezar jarib St., Isfahan 8174657811, Iran.  Phone:  +98-3142636343; Fax: +98-3137933071; Email: Hamid.esmaeili@gmail.com.}
		Mohammad Ataei, and Mohamad F. Sabahi
		\IEEEmembership{ Member, IEEE}\thanks{Mohammad Ataei and Mohamad Farzan Sabahi are with Department of Electrical Engineering, University of Isfahan, Hezar jarib St., Isfahan 8174657811, Iran.  Phone:  +98-3137934068; Fax: +98-3137933071; Email: Ataei@eng.ui.ac.ir, sabahi@eng.ui.ac.ir.}\hvpc
	}

	\maketitle
	\IEEEpeerreviewmaketitle

	
	\begin{abstract}
		Controlling peak side-lobe level (PSL) is of great importance in high-resolution applications of multiple input multiple output (MIMO) radars.
		In this paper, designing sequences with good autocorrelation properties are studied. The PSL of the autocorrelation is regarded as the main merit and is optimized through newly introduced cyclic algorithms, namely; PSL Minimization Quadratic Approach (PMQA), PSL Minimization Algorithm, the smallest Rectangular (PMAR) and PSL Optimization Cyclic Algorithm (POCA). It is revealed that minimizing PSL results in better sequences in terms of autocorrelation side-lobes when compared with traditional integrated side-lobe level (ISL) minimization.
		In order to improve the performance of these algorithms, fast-randomized Singular Value Decomposition (SVD) is utilized.
		To achieve waveform design for MIMO radars, this algorithm is applied to the waveform generated from a modified Bernoulli chaotic system.
		The numerical experiments confirm the superiority of the newly developed algorithms compared to high-performance algorithms in mono-static and MIMO radars.
	\end{abstract}
	\begin{IEEEkeywords}
		MIMO radar waveform design; Peak side-lobe level; Chebyshev Distance.
	\end{IEEEkeywords}
	
	\section{Introduction}
	\IEEEPARstart{W}{aveform}  design is a traditional problem in radar and communication systems.
	Classically, in these systems, a matched filter is applied to detect the target or message in the presence of a background white Gaussian noise (WGN).
	In this context, a well-transmitted waveform should have low autocorrelation side-lobes for preventing false results in detection.
	Study of sequences with good autocorrelation properties, with respect to radar applications in mind, is a classic topic \cite{Frank1962,Xiong2008,Turyn1961,Golomb1965,Dai2008,Borwein2008,Boehmer1967,Baden2011}. In fact, the related literature covers a wide array ranging from bi-phase and poly-phase to more recent chaotic and algorithmic methods. For the bi-phase Barker \cite{Barker1953}, for poly-phase Golomb \cite{Golomb1965}, Frank \cite{FrankZadoffHeimiller1962} and Chu \cite{Chu1972}, for chaotic Lorenz \cite{He2005} and for algorithmic method cyclic algorithm (CA) \cite{Dai2008} and majorization minimization (MM) \cite{SongBabuPalomar2016} methods are just a few to be enumerated.
	In all the aforementioned references, a good sequence is the one with impulse-like autocorrelation.
	In general, there exist two major merits to measure the resemblance of a sequence with impulse: integrated side-lobe level (ISL) and peak side-lobe level (PSL).
	Minimizing the first was the topic of several recent publications \cite{Stoica2009,Song2015,Hao2009,SongBabuPalomar2016}.
	However, much less effort is made on minimizing the second.
	On the other hand, generating a set of sequences with minimized PSL is of great importance in high-resolution applications of MIMO radars \cite{Haimovich2008}.
	In fact, for a radar engineer autocorrelation corresponds to the output of the matched filter of the radar system. Generally speaking, the peak of the side-lobes corresponds to the falsely detected objects (false alarms), while high peak side-lobes result in masking of the low signature targets next to high signature targets. Hence, in order to have low false alarms, the peak of the side-lobes should be lowered as low as possible. In this regard, the main contribution of this article is to address the problem of PSL minimization of a sequence and acclimatizing it to MIMO radars through chaotic waveforms as the initial sequences. 
	
	In \cite{Stoica2009}, authors have solved the problem of ISL minimization for an unimodular sequence (i.e. all elements have unit absolute value).
	They derived several cyclic algorithms: CAN, CAD, CAP, and WeCAN.
	In \cite{He2009}, by generalizing some methods given in \cite{Stoica2009}, several algorithms are developed to generate waveforms appropriate in MIMO radars applications.
	In addition to autocorrelation, they minimized the cross-correlation between the generated waveforms.
	In \cite{Zhuang2011}, the waveform design in presence of clutter and white Gaussian noise is assessed.
	Accordingly, a cyclic algorithm  is developed to maximize signal-to-clutter-plus-noise ratio (SCNR) under the constant modulus constraint.
	In \cite{Cui2014}, through cyclic optimization, a computationally attractive algorithm is introduced for the synthesis of constant modulus transmit signals with good auto-correlation properties for MIMO radars.
	In a recent work \cite{Song2015}, the problem of minimizing $l_{p}$ (for $2\le p<\infty$) norm on the side-lobes of autocorrelation is addressed.
	They majorized this problem by an $l_{2}$ problem, and by solving it, they minimized the $l_{p} $-norm of the autocorrelation side-lobes.
	Then, by choosing large values for $p,$ they approached $l_{\infty }$ norm, which is indeed The PSL.
	However, in the case of peak side-lobe level, their algorithm named ``monotonic minimizer (MM) for $l_{p}$-metric", lacks the ability to suppress a specified part of the autocorrelation.
	Moreover, this minimizer actually does not minimize PSL, but minimizes an $l_{p}$-metric, then by choosing large $p$ it approaches PSL minimization.
	As noted before, like WeCAN and CAP, they lack the ability to suppress more than half of the autocorrelation side-lobes to ``almost zero". 
	
	In \cite{Jin2013}, chaos is introduced to generate phase-coded waveforms for MIMO radars.
	In \cite{Willsey2011a}, authors introduced quasi-orthogonal waveforms for wideband MIMO radars application.
	They have established that chaotic waveforms possess many desirable radar properties.
	Consequently, in the line of previous work, in \cite{Willsey2011}, the selection of parameters for Lorenz system for wideband radars is studied.
	In this article, the problem of PSL minimization is formulated and solved.
	Then, by applying chaotic waveforms as the initial sequence to the newly developed algorithms, a set of waveforms proper for MIMO radars is constructed.
	The obtained results can be summarized as follows:\vspace{-.1pc}
	\begin{itemize}
		\item  Problem formulation for PSL minimization and solving it
		
		\item  Suppression of more than half of the autocorrelation side-lobes to ``almost zero''
		
		\item  Better side-lobes both in terms of PSL and ISL
		
		\item  Ability to generate a large set of waveforms with low cross-correlation
		
		\item    Ability to deal with long sequences
	\end{itemize}\vspace{-.1pc}
	Chaotic waveforms have been and are extensively applied to radar systems \cite{Willsey2011,Ott2002,Liu2007,H.E.Najafabadi2015}.
	The chaotic systems' outputs are bounded, aperiodic and sensitive to initial conditions, resulting in low peak to average power ratio (PAPR), low autocorrelation side-lobes and low cross-correlation, respectively.
	Moreover, the simplicity of waveform generation in these systems enables the design of a set of waveforms with high cardinality, which is appropriate in MIMO radars \cite{Willsey2011a,Willsey2011}.
	In addition to these properties, chaotic waveforms are noise-like deterministic signals; therefore, 
	radars which apply these kinds of waveforms have a low probability of interception.
	
	Bernoulli map is a typical defining example for chaos \cite{Brown1998}.
	This system is one of the simplest systems with the capability of generating chaos.
	It has phase space dimensionality of two, that is, it can produce high cardinality set of waveforms of any length with very low computational effort\cite{Pollicott1998}.
	Consequently, by the juxtaposition of this chaotic system and PSL minimization algorithm, it is possible to generate an arbitrary number of sequences with low cross-correlation, low autocorrelation low-rank and low probability of interception.
	
	The organization of the article is as follows.
	The problem of PSL minimization is formulated based on Chebyshev distance and solved through some novel cyclic algorithm in section \ref{PSLMIN}.
	Then, by enhancing the time-consuming parts of the algorithm, the speed performance of this algorithm is improved in section \ref{sec:III}. Consequently, in section \ref{sec:Solv}, imposing additional constraints on the problem is studied.
	In section \ref{sec:IV}, modified Bernoulli map is introduced.
	It is shown that by applying our algorithm to the sequences generated by a chaotic system, a suitable set of waveforms for MIMO radars can be generated.
	Accordingly, in section \ref{sec:V}, the proposed methods are evaluated.
	Afterward, the conclusion is given. The last part but not the least is the appendix.
	The presented appendix is to reveal the fact that the Chebyshev distance indeed defines a complete normed vector space, known to the mathematicians as Banach space.

	\subsection{Notations}
	We use bold lower case for column vectors and bold uppercase for matrices. The symbols $\bA^{T}$ and $\bA^{H}$ represent transpose and conjugate transpose of the matrix $\bA$, respectively. $\be_{k}$ denotes the $k$-th canonical vector, a column vector with zero elements, except for the $k$-th element, which is one. $\left|a\right|$ indicates the absolute value of the scalar $a$. $\tbd.\tbd_{2}$ denote the Frobuinous norm of a vector or matrix. The norm $\tbd.\tbd_{\infty }$ for a matrix $\bA$ is defined according to $\tbd \bA\tbd_{\infty } =\max \left|A_{i,j} \right|$, that is, $\tbd \bA\tbd_{\infty }$ is the norm defined by Chebyshev distance over the space of all complex valued $N\times M$ matrices. In appendix \ref{sec:appA}, it is shown that this operator actually defines a norm on the vector space of all complex matrices. The symbol $\le ^{d}$ is the lexicographical order or dictionary order on ${\mathbb C}$. That is, $b\le ^{d} a$ if and only if,
	\begin{multline} \label{1)}  Im\{ b\} \le Im\{ a\} \, \, \, \, \vee    \\(Im\{ b\} \le Im\{ a\} \wedge Re\{ b\} =Re\{ a\} ),  \end{multline}
	where, $ \vee $ and $ \wedge $ are ``logical or" and ``logical and", respectively.
	Accordingly, by $\max ^{d}$ and $\min ^{d}$, the maximum and minimum under the dictionary order are meant. Note that, $\le ^{d}$ is a total order on ${\mathbb C}$.
\mvpc
	\section{Peak Side-lobe Level Minimization}\label{PSLMIN}
	Let $\{ x_{n} \} _{n=1}^{N}$ represent the sequence to be designed. The autocorrelation function of this sequence is:
	\begin{equation} \label{2)}
	r_{k} =\mathop{\sum }\limits_{n=k+1}^{N} x_{n} x_{n-k}^{*} =r_{-k}^{*} ,k=0,...,N-1.
	\end{equation}
	Commonly, two major merits are considered for sequence performance in the radar; the ISL and PSL. The first is defined by,
	\begin{equation} \label{3)}
	{\rm ISL\; }=\mathop{\sum }\limits_{k=1}^{N-1} |r_{k} |^{2} .
	\end{equation}
	The PSL is computed by:
	\begin{equation}
	{\rm PSL\; }=\mathop{\max }\limits_{k\ne 0} |r_{k} |.
	\label{4)}\end{equation}
	The matrix $\bX$ for the sequence $\{ x_{n} \} _{n=1}^{N}$, is defined according to:
	\begin{equation} \label{ZEqnNum290740} {\it \bX}=\begin{bmatrix}x_{1} & 0 & \cdots & 0\\
	\vdots & \ddots & \ddots & \vdots\\
	\vdots & \vdots & \ddots & 0\\
	x_{N} & \vdots & \vdots & x_{1}\\
	0 & \ddots & \vdots & \vdots\\
	\vdots & \ddots & \ddots & \vdots\\
	0 & \cdots & 0 & x_{N}
	\end{bmatrix}_{(2N-1)\times N} \end{equation}
	
	Having defined $\bX$, the autocorrelation of the sequence $\{ x_{n} \} _{n=1}^{N}$ is represented by,
	\begin{equation} \label{6)} {\it \bX}^{H} {\it \bX}=\left[\begin{array}{cccc} {r_{0} } & {r^{*} _{1} } & {\ldots } & {r_{N-1}^{*} } \\ {r_{1} } & {r_{0} } & {\ddots } & {\vdots } \\ {\vdots } & {\ddots } & {\ddots } & {r^{*} _{1} } \\ {r_{N-1} } & {\ldots } & {r_{1} } & {r_{0} } \end{array}\right].     
	\end{equation}
	Considering the fact that a good autocorrelation is the one with $r_{0} > 0$ and $r_{i} =0,i\ne 0$, a sequence could be designed, where $\bX^{H}  \bX \cong N  \bI$. Accordingly, let
	\begin{equation} \label{7)} \tbd \bA \tbd_{\infty } =\mathop{\max }\limits_{_{i,j} } \left|A_{i,j} \right|        \end{equation}
	be the norm defined by Chebyshev distance for matrix $\bA$ (refer to appendix for the proof that it is actually a  norm). Minimizing the PSL of the sequence $\{ x_{n} \} _{n=1}^{N}$ is equivalent to minimizing:
	\begin{equation} \label{ZEqnNum744207}
	\tbd\bX^{H}  \bX-N \bI \tbd_{\infty }.
	\end{equation}
	It is assumed that coping this quadratic problem is unapproachable, therefore, with an argument to be followed, minimizing the following equation instead is considered,
	\begin{equation} \label{ZEqnNum907886} \tbd \bX -\sqrt{N} \bL \tbd_{\infty }  ,        \end{equation}
	where the matrix $\bL$ with the dimensionality of $\left(2N-1\right)\times N$ satisfies $\bL^{H} \bL= \bI$, and $\bX$ is of the form defined in \eqref{ZEqnNum290740}, indicating that the problem of minimizing PSL is equivalent to the following minimization problem,
	
	\begin{equation} \label{ZEqnNum681717} \begin{array}{l} {{\rm min}\tbd\bX-\sqrt{N}  \bL\tbd_{\infty }} \\ {s.t.\, \, \bL^{H} \bL= \bI} \end{array} \end{equation}
	
	\textbf{Argument: }
	The normed space $({\mathbb C}^{N\times M} ,\tbd.\tbd_{\infty })$ is a Banach space (for the proof refer to Appendix A) and topologically equivalent to the Euclidean space $( {\mathbb C}^{N\times M} ,\tbd.\tbd_{2} )$.
	Hence, the norm is a continuous function and space is complete.
	Therefore, although \eqref{ZEqnNum744207} and \eqref{ZEqnNum907886} are not equivalent, they are almost equivalent. That is, \eqref{ZEqnNum907886} is zero if and only if \eqref{ZEqnNum744207} is zero. Hence, from the continuity, if the global minimum of \eqref{ZEqnNum907886} is sufficiently small, then the sequence where \eqref{ZEqnNum907886} is minimized, is arbitrary close to the solution of \eqref{ZEqnNum744207} $\blacksquare$.
	
	Before trying to find the solution of \eqref{ZEqnNum907886}, consider the situation where not all $\{ r_{k} \} _{k=0}^{N-1}$ but some part of them is required to be made small, as considering $r_{k } ,{\rm \; }k=1,\ldots Q-1$, where $Q \le N$.
	That is just $Q-1$ first autocorrelation coefficients are of importance.
	Accordingly, Eq. \eqref{ZEqnNum290740} should be altered properly, that is, let
	\begin{equation} \label{11)} \tilde{\bX}=\begin{bmatrix}x_{1} & 0 & \cdots & 0\\
	\vdots & \ddots & \ddots & \vdots\\
	\vdots & \vdots & \ddots & 0\\
	\vdots & \vdots & \vdots & x_{1}\\
	x_{N} & \vdots & \vdots & \vdots\\
	0 & \ddots & \vdots & \vdots\\
	\vdots & \ddots & \ddots & \vdots\\
	0 & \cdots & 0 & x_{N}
	\end{bmatrix}_{\left(N+Q-1\right)\times Q}. \end{equation}
	Then we have,
	\begin{equation} \label{6)}  \tilde{\bX}^{H} \tilde{\bX}=\left[\begin{array}{cccc} {r_{0} } & {r^{*} _{1} } & {\ldots } & {r_{Q-1}^{*} } \\ {r_{1} } & {r_{0} } & {\ddots } & {\vdots } \\ {\vdots } & {\ddots } & {\ddots } & {r^{*} _{1} } \\ {r_{Q-1} } & {\ldots } & {r_{1} } & {r_{0} } \end{array}\right].      \end{equation}

	Indeed, by considering this configuration, the PSL of selected autocorrelation coefficients $r_{k } ,k=1,\ldots Q-1,$ can be minimized through solving the following minimization problem,
	\begin{equation} \label{ZEqnNum540931}
	\begin{array}{l} {{\rm min}\tbd \tilde{\bX}-\sqrt{N} \bL \tbd_{\infty } } \\ {s.t.\, \, \bL^{H} \bL=\bI} \end{array}
	\end{equation}
	where, $\bL$  is a $\left(N+Q -1\right)\times Q$ matrix. Like any other cyclic algorithm, finding the solutions of $\tilde{\bX}$ and $\bL$ from each other in a cyclical manner would suffice. However, because of the infinity norm, it is not possible to find the $\bL$. To mitigate this problem, here, just an approximate solution of $\bL$ is sufficient. In fact, because of the cyclic nature of this algorithm, it is sufficient to have an enhancement at each step. A proper approximate solution of \eqref{ZEqnNum540931} with respect to $\bL$ can be obtained by singular value decomposition, accordingly, first let
	\begin{equation} \label{ZEqnNum170766}
	\tilde{\bX}^{H} =\bU_{1} \bSigma \bU_{2}^{H}
	\end{equation}
	be the economy-sized singular value decomposition (SVD) of the conjugate transpose of the matrix $\tilde{\bX}$, then,
	\begin{equation} \label{16)}
	\bL=\bU_{2} \bU_{1}^{H} ,
	\end{equation}
	is the solution of the Euclidean version of the problem in \eqref{ZEqnNum540931} (see \cite{Stoica2009,Li2008,He2009} and the references therein).
	\textbf{Fig.} \ref{fig1} illustrates the conceptual solution of Euclidean and Chebyshev norm version of the problem in \eqref{ZEqnNum540931}, for two dimensions.
	In this figure, the 8 points for $\tilde{\bX}$ and their projection on the feasible set $\bL^{H} \bL=\bI$ under the Euclidean and Chebyshev distance are illustrated.
	The length of the dotted and solid lines, which connect each point and its projection on the circle, describes the Chebyshev and Euclidean distance in this figure, respectively.
	Almost all other configurations of the $\tilde{\bX}$ are symmetrically equivalent to these 8 points.
	It is deduced from Fig. \ref{fig1} that the error in this approximation is less than 1/8 of the circle in all configurations.
	In this figure, the solutions of the Chebyshev distance problem (i.e. \eqref{ZEqnNum540931}), are obtained such that by relocating the solution on the circle, the triangle side with the maximum length can not be shortened.
	Note that, in 4 of the 8 configurations, the approximation errors are zero. Besides, as the dimensionality of the problem increases the approximation error occupies less proportion of the corresponding hyper-ball.
	\begin{figure}
		\raggedleft\mvpc
		\includegraphics[width=100mm,height = 80mm,page = 1]{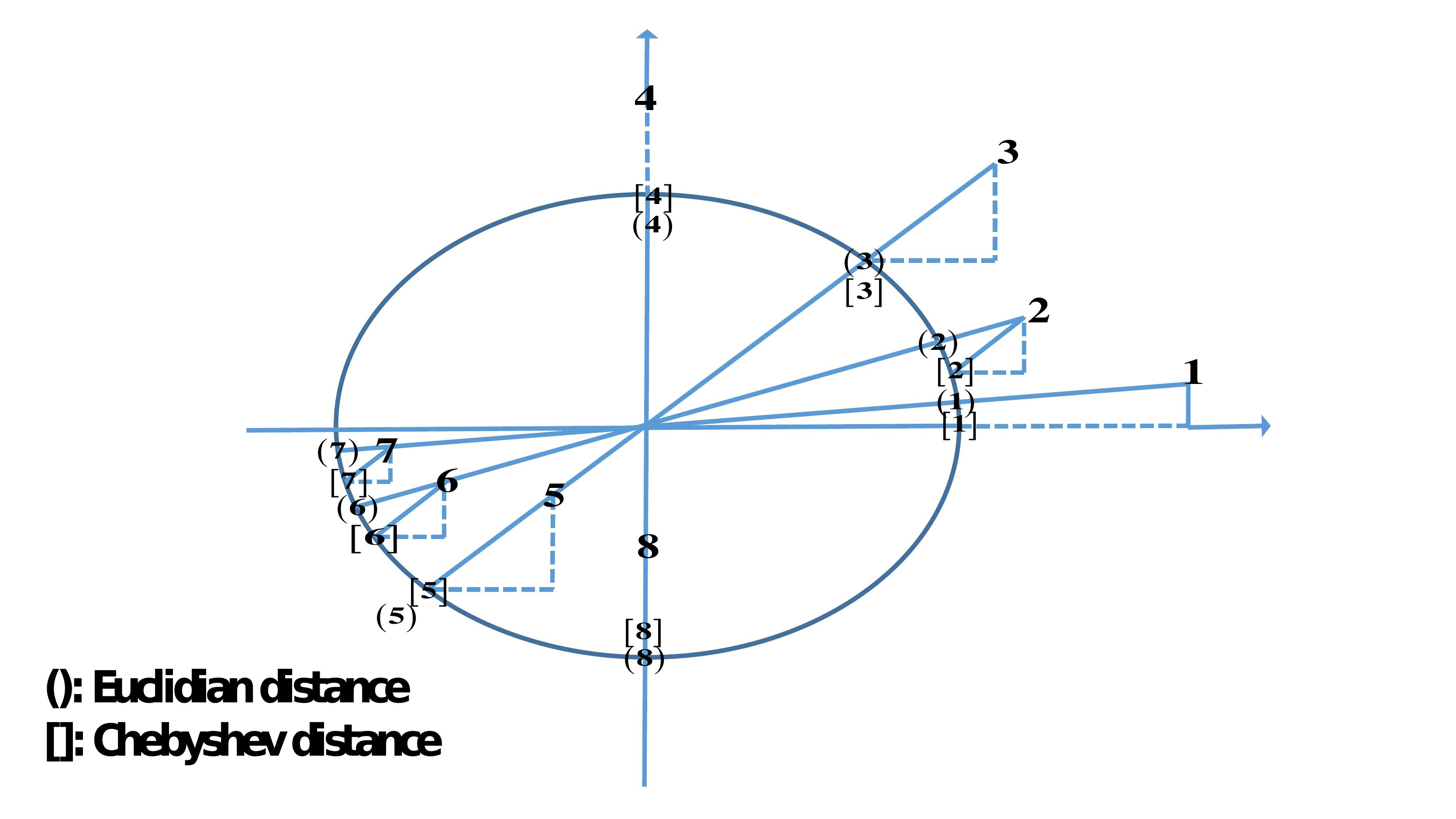}
		\caption{\small Conceptual illustration for comparison of the solution of \eqref{ZEqnNum540931} with the solution of Euclidean version of it in 2 dimension}\label{fig1}
	\mvpc\hvpc
	\end{figure}
	The second step in the cycle is to solve \eqref{ZEqnNum540931} with respect to $\{ x_{n} \} _{n=1}^{N}$ for a given $\bL$. In order to accomplish this let $x$ be an arbitrary element of $\{ x_{n} \} _{n=1}^{N}$ and $\left\{\mu _{k} \right\}$ be the corresponding elements of the matrix $\sqrt{N} \bL$ the positions of which are the same as $x$ in $\tilde{\bX}$. Then, the generic form of minimization in \eqref{ZEqnNum540931} becomes,
	\begin{equation} \label{ZEqnNum175528}
	\mathop{\min }\limits_{x} \mathop{\max }\limits_{k} \left|x-\mu _{k} \right|.
	\end{equation}
	This problem is a well-acknowledged problem known as the smallest enclosing circle problem, which has matured algorithms in order to find exact solution (for instance refer to \cite{Xu2003}). First,  note that the problem in  \eqref{ZEqnNum175528} can be converted to,
	\begin{equation}\label{18)}
	\begin{gathered}
	\mathop{\min }\limits_{x_{R} ,\, x_{I} } R  \\
	s.\, t.\, \, \sqrt{\left(x_{R} -Re\left\{\mu _{k} \right\}\right)^{2} +\left(x_{I} -Im\left\{\mu _{k} \right\}\right)^{2} } \le R, \\
	k = 1,...,Q-1,
	\end{gathered}
	\end{equation}
	where, $x_{R}$ and $x_{I}$ are the real and imaginary parts of $x$. Consider the following problem for a fixed $R\ge 0$,
	\begin{equation}\label{19)}
	\begin{gathered}
	\min \, \, \theta  \\
	s.t.\, \, \left(x_{R} -Re\left\{\mu _{k} \right\}\right)^{2} +\left(x_{I} -Im\left\{\mu _{k} \right\}\right)^{2} -\theta \le R^{2}\\
	k = 1,...,Q-1.
	\end{gathered}
	\end{equation}
	Then, the problem in \eqref{19)} becomes equivalent to the problem in \eqref{18)} in a sense that $\left(x_{R}^{*} ,x_{I}^{*} ,R^{*} \right)$ is an optimal solution of  \eqref{18)} if and only if $\left(x_{R}^{*} ,x_{I}^{*} ,0\right)$ is an optimal solution of \eqref{19)} for $R^{*} =R$ (see the proof of the theorem 2 in \cite{Xu2003}).
	Having established this equivalency, the problem in \eqref{19)} can be converted into a series of constrained quadratic programming problems,
	accordingly, define $z:=x_{R}^{2} +x_{I}^{2} -\theta .$
	Then, \eqref{19)} can be reformulated as
	\begin{equation}\label{ZEqnNum479625}
	\begin{gathered}
	\min \, \, \, x_{R}^{2} +x_{I}^{2} -z  \\
	{s.\, \, t.\, \, \, \, -2x_{R} Re\left\{\mu _{k} \right\}\, \, -2x_{I} Im\left\{\mu _{k} \right\}+z} \\
	\le R^{2} -\left(Re\left\{\mu _{k} \right\}\, \right)^{2} -\left(Im\left\{\mu _{k} \right\}\right)^{2} ,\, \, k=1...Q-1
	\end{gathered}
	\end{equation}
	This problem is a quadratic programming with linear constraints, where, if its optimal value is zero, then $R$ is the optimal value for the problem in \eqref{18)}, otherwise, reduce $R$ and solve the newly obtained quadratic problem. Accordingly, one may follow the following steps in order to get the exact solution of \eqref{18)}.
	
	Sub-algorithm 1:
	\begin{enumerate}
		\item  Start from an appropriate point e.g.,
		\begin{equation}
		x=\; \frac{\max ^{d} \left\{\mu _{k} \right\}+\min ^{d} \left\{\mu _{k} \right\}}{2} .
		\end{equation}
		and compute,
		\begin{equation} \label{eq:R}
		R=\mathop{\max }\limits_{k} \sqrt{(x_{R} -Re\{ \mu _{k} \} )^{2} +(x_{I} -Im\{ \mu _{k} \} )^{2} }
		\end{equation}
		
		\item  Solve \eqref{ZEqnNum479625} and find $z$, $x_{R}$ and $x_{I} $.

		\item  If $\left|\, \, x_{R}^{2} +x_{I}^{2} -z\right|<\delta $ then stop. Else, re-compute $R$ in \eqref{eq:R} and go to step 2.
	\end{enumerate}
	Consequently, the PSL minimization algorithm based on quadratic approach is introduced through Algorithm 1.
	In addition to this approach, other methodologies for solving the smallest circle problem in \eqref{ZEqnNum175528} can be adopted. Two other such methodologies for exact solution of the problem are introduced in appendix B.  Nevertheless, most of these solutions are computationally very expensive for our application. For instance, the cone optimization approach requires $O(Q^{3.5} \left|{\rm \; log\;}\delta \right|)$ arithmetic operations to find the optimal solution of \eqref{ZEqnNum175528}, where $\delta$ is a user specified parameter for accuracy. Similarly, in the best case scenario, the Sub-algorithm 1 requires at least $O(Q^{2})$ operations. Since the solution of \eqref{ZEqnNum175528} is required in ``every iteration", the preference is to find an approximate but fast solution.
	 \begin{tabular}{|p{0.2in}|p{2.85in}|} \hline
		
		\multicolumn{2}{|p{2.85in}|}{\textbf{Algorithm }1\textbf{:} \textbf{PMQA }(PSL Minimization Quadratic Approach)} \\ \hline
		1 & Set $\{ x_{n} \} _{n=1}^{N} $ to an initial sequence. (Initialization) \\ \hline
		2 & Constitute the matrix $\tilde{\bX}$ and compute $\bL$ according to equation \eqref{ZEqnNum170766} \\ \hline
		3 & For each $x_{n} $ constitute the sequence$\{ \mu _{k} \} _{k=1}^{Q-1} $, solve the smallest circle problem in \eqref{ZEqnNum175528} by following steps 1,2,3 as in Sub-algorithm  1  \\ \hline
		4 & Go to step 2 till some stop criterion is satisfied \newline (e.g. $\tbd \tilde{\bX}_{new} -\tilde{\bX}_{old} \tbd_{\infty } <\varepsilon $) \\ \hline
	\end{tabular}

	\section{Acceleration Schemes}\label{sec:III}
	\begin{figure}\hvpc
		\centering
		\includegraphics[trim = 0mm 30mm 0mm 30mm, clip ,width =65mm, height = 72mm]{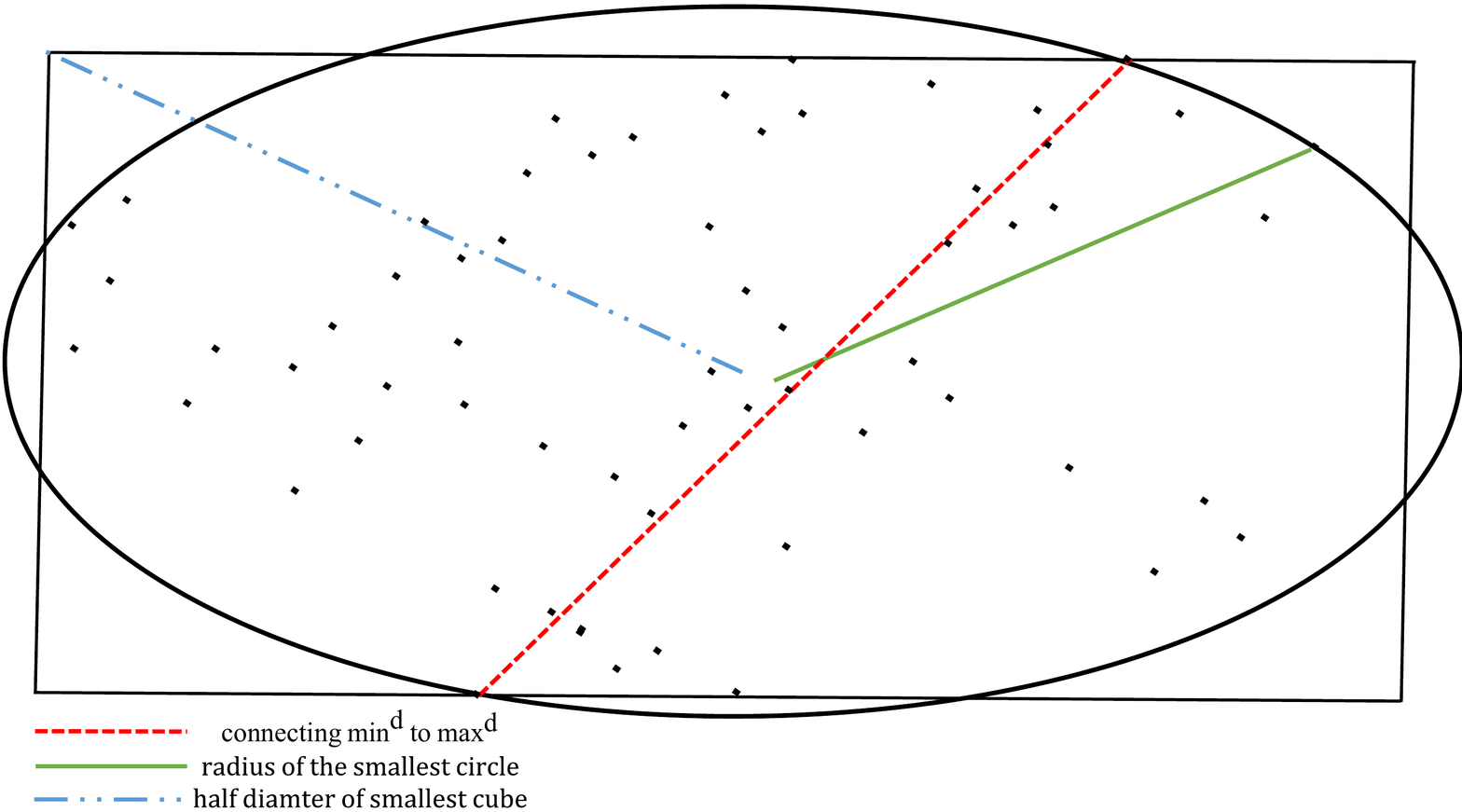}
		\caption{\small Conceptual illustration for comparing solutions to the smallest circle, smallest rectangular problems and the one introduced in \eqref{ZEqnNum457011}}
		\label{fig2}
		\vspace{-1.5pc}
	\end{figure}
	
	In order to accelerate the algorithm, the first step is to avoid the iterative Sub-algorithm 1 by some wise guess. Accordingly, instead of solving the problem in \eqref{ZEqnNum175528}, the following approximated problem is considered
	\begin{equation} \label{21)}
	\mathop{\min }\limits_{x} \mathop{\max }\limits_{k} \left|x-\mu _{k} \right|_{\infty },
	\end{equation}
	wherein  $\left|a \right|_{\infty } :=\mathop{\max }\{Re(a),Im(a)\}$.
	In fact, given the sequence $\left\{\mu _{k} \right\}$, the above mentioned problem is the smallest enclosing rectangular for this sequence.
	\textbf{Fig.} \ref{fig2}  depicts the smallest rectangular and smallest circle problems for a random sequence. In fact, in almost all cases the solution to the smallest rectangular is a good approximation to the smallest circle problem. Moreover, the smallest rectangular has a closed-form solution in the form,
	\begin{equation} \label{22)} \begin{array}{l} {x_{*} =\; \frac{\max \left\{Re\{ \mu _{k} \} \right\}+\min \left\{Re\{ \mu _{k} \} \right\}}{2} } \\ {\, \, \, \, \, +j\frac{\max \left\{Im\{ \mu _{k} \} \right\}+\min \left\{Im\{ \mu _{k} \} \right\}}{2} .} \end{array} \end{equation}
	This solution consumes $O(Q)$ logic and $O(1)$ arithmetic operations (additions and multiplications) in each iteration, hence saving a considerable amount of time. Furthermore, as a  rougher and more fast approach one may consider the following approximate solution to \eqref{ZEqnNum175528},
	\begin{equation} \label{ZEqnNum457011} x_{*} =\; \frac{\max ^{d} \left\{\mu _{k} \right\}+\min ^{d} \left\{\mu _{k} \right\}}{2} . \end{equation}
	It is worth mentioning that if one confines problem in \eqref{ZEqnNum175528} to real sequences then the exact solution of \eqref{ZEqnNum175528}  will be of the similar form (see  end of appendix B \eqref{sec:appB} for proof.),
	\begin{equation} \label{24)} x_{*} =\; \frac{\max \left\{\mu _{k} \right\}+\min \left\{\mu _{k} \right\}}{2} . \end{equation}
	Consequently, the following two algorithms are obtained for generating sequences with good autocorrelation shapes. \vspace{.5pc}
	
	\noindent
	\begin{tabular}{|p{0.2in}|p{2.85in}|} \hline
		\multicolumn{2}{|p{2.85in}|}{\textbf{Algorithm 2:} \textbf{PMAR }(PSL Minimization Algorithm where the smallest Rectangular)} \\ \hline
		1 & Set $\{ x_{n} \} _{n=1}^{N} $ to an initial sequence. (Initialization) \\ \hline
		2 & Constitute the matrix $\tilde{\bX}$ and compute $\bL$ according to equation \eqref{ZEqnNum170766} \\ \hline
		3 & For each $x_{n} $ constitute the sequence$\{ \mu _{k} \} _{k=1}^{Q-1} $, solve the smallest rectangular problem in \eqref{21)} using \eqref{22)} to achieve the new $x_{n} $ \\ \hline
		4 & Go to step 2 till some stop criterion is met \newline (e.g. $\tbd\tilde{\bX}_{new} -\tilde{\bX}_{old} \tbd_{\infty } <\varepsilon $) \\ \hline
	\end{tabular}
	
   \vpc\noindent
	\begin{tabular}{|p{0.2in}|p{2.85in}|} \hline
		\multicolumn{2}{|p{2.85in}|}{\textbf{Algorithm 3:} \textbf{POCA }(PSL Optimization Cyclic Algorithm)\textbf{}} \\ \hline
		1 & Set $\{ x_{n} \} _{n=1}^{N} $ to an initial sequence (Initialization) \\ \hline
		2 & Constitute the matrix $\tilde{\bX}$ and compute $\bL$ according to equation \eqref{ZEqnNum170766} \\ \hline
		3 & For each $x_{n} $ constitute the sequence$\{ \mu _{k} \} _{k=1}^{Q-1} $, find the maximum and minimum of this sequence under the dictionary order and set $x_{n} $ to arithmetic mean of them  \\ \hline
		4 & Go to step 2 till some stop criterion is satisfied \newline (e.g. $\tbd\tilde{\bX}_{new} -\tilde{\bX}_{old} \tbd_{\infty } <\varepsilon $) \\ \hline
	\end{tabular}\vspace{.5pc}
	Alongside the enclosed iteration, the most computationally intensive part of our algorithms and those that minimize autocorrelation related factors like CA, CAP, CAD, is SVD. This is just the problem when large values of $N$ is of concern. Reminding aforementioned argument that even an approximate value of $\bL$ works very well, fast low rank SVD algorithms can be applied (\cite{Woolfe2008} and \cite{Tropp2009}) to improve the complexity costs of the algorithms. Specifically, let $S\ll Q \le N$ and $\bOmega$ be a Gaussian random matrix of the size $(N+Q-1)\times S$. Constitute,
	\begin{equation} \label{25)}
	\bY=\tilde{\bX}^{H} \bOmega .
	\end{equation}
	where $\bY$ is of the size $Q \times S  $. Next, let
	\begin{equation} \label{26)}
	\bY=\bQ\bR\;
	\end{equation}
	be the QR factorization of the matrix $\bY$, such that $\bQ$ be $Q \times S$ unitary matrix. Then, form matrix,
	\begin{equation} \label{27)} \bB=\bQ^{H} \tilde{\bX}^{H}.\;  \end{equation}
	Note that $\bB$ has much lower dimensions $\left(S\times (N+Q-1)  \right)$ compared to original $ \tilde{\bX}$, that is, SVD of this matrix is computationally cheaper.  Compute the SVD of this thiner matrix,
	\begin{equation} \label{28)}
	\bB=\hat{\bU}_{1} \hat{ \bSigma }\bU_{2}^{H}.
	\end{equation}
	Finally, form the orthonormal matrix $\hat{ \bU}_{1}$, and compute $ \bU_{1}$ according to
	\begin{equation} \label{29)}
	\bU_{1} =\bQ\hat{\bU}_{1}
	\end{equation}
	It is known that, if the value of $S$ is chosen large enough, then the initial matrix $ \tilde{\bX}^{H}$ can be approximated arbitrarily close by (see \cite{Woolfe2008,Tropp2009} and references therein),
	\begin{equation} \label{ZEqnNum770037}
	\tilde{\bX}^{H} \approx \bU_{1} \hat{\bSigma }\bU_{2}^{H} ,
	\end{equation}
	In fact, as observed in the aforementioned references even for small values of $S$ this works well. The experiments run here indicate that for $S$ values as small as 4 suffice.
	
	By applying this technique, those versions of CA algorithm where SVD is involved (see, e.g. \cite{Stoica2009,He2009}) can be made faster. Regarding the algorithm introduced here, i.e. the POCA, this enhancement results in the following algorithm, where it is named RPOCA for randomized PSL optimization cyclic algorithm. The other SVD based cyclic algorithms, like PMQA, PMAR, CA, CAP, and CAD can be dealt with similarly. To avoid unnecessarily lengthening of the article, they are not developed here.
	\vspace{.7pc}
	\begin{tabular}{|p{0.1in}|p{2.85in}|} \hline
		\multicolumn{2}{|p{2.85in}|}{\textbf{  Algorithm 4: RPOCA }(Randomized PSL Optimization Cyclic Algorithm)} \\ \hline
		1 & Set $\{ x_{n} \} _{n=1}^{N} $ to an initial sequence. Choose $S\ll Q \le N$  and set $\bOmega $ to a Gaussian random matrix. (Initialization) \\ \hline
		2 & Constitute the matrix $\tilde{\bX}$ (equation\eqref{11)}) and compute the low rank SVD of it using the equations \eqref{ZEqnNum770037} then form $\bL$ according to equation \eqref{ZEqnNum170766}  \\ \hline
		3 & For each $x_{n} $ constitute the sequence $\{ \mu _{k} \} _{k=1}^{Q-1} $, find lexicographic maximum and minimum of this sequence according to \eqref{ZEqnNum457011} and set $x_{n} $ to their arithmetic mean.  \\ \hline
		4 & Go to step 2 till some stop criterion is satisfied.  \\ \hline
	\end{tabular}

	Note that the computational cost of low-rank SVD is in the order of $O((N+Q -1)QS)$. Besides, RPOCA needs no random access to matrix $\tilde{\bX}$, that is, it can be implemented simpler (see \cite{Tropp2009,Woolfe2008}).
	In fact, in computing a low-rank SVD version of $\tilde{\bX}$ (equation \eqref{ZEqnNum770037}), just the matrices of the size $S{\rm \times }Q $ should be fitted in the random access memory (RAM).
	On the contrary, the conventional SVD needs all the $\tilde{\bX}$, with dimensions of $(N+Q -1){\rm \times }Q $, to be fitted in RAM.
	\mvpc
	\section{Solving the problem with constraints}\label{sec:Solv}
	No additional constraint was assigned to the developement of the above algorithms. 
	However, based on practical requirements some constraints can be added to this kind of development. Such constraints have been and are being adopted by many authors, like \cite{Stoica2009,SongBabuPalomar2016,Song2015,Hao2009}. One of such requirements is the unimodularity constraint. The sequence $ \{ x_{n} \} _{n=1}^{N} $ is called unimodular if and only if, 
	\begin{equation}\label{key}
	|x_n| = 1,    n = 1,...,N.
	\end{equation} 
	Unimodular sequences have the lowest possible peak to average power ratio (PAPR). Alternatively, one might consider restricting PAPR itself. In order to enforce such a restriction define $ \bx = [x_1,...,x_n]^{T} $. Since in our developed $\bX^{H}  \bX \cong N  \bI$ the average power is $ \frac{1}{N}\bx^{H}\bx\cong 1 $. Hence, in order to restrict PAPR it is sufficient to restrict peak power or equivalently adding the following constraints to each problem in the development of previous sections.
	\begin{equation}\label{key}
	\||\bx|\|_{\infty}\le a
	\end{equation}
	wherein $ |.| $ denotes the element-wise absolute value of the complex vector $ \bx $ and $ \|. \|_\infty $ denotes the Chebyshev norm of $ \bx $.
	In order to preserve brevity, we avoid rewriting the 
	whole problems again. Nevertheless, such constraints should be added to the equations \eqref{ZEqnNum744207}, \eqref{ZEqnNum681717}, \eqref{ZEqnNum540931}, \eqref{ZEqnNum175528}, \eqref{18)}, \eqref{19)}, \eqref{ZEqnNum479625},\eqref{eq:R}. This alteration, illustrates itself in the sub-algorithm 1, Eq. \eqref{eq:R}, where the equation becomes, 
	\begin{gather}\label{key}
	R=\mathop{\max }\limits_{k} \sqrt{(x_{R} -Re\{ \mu _{k} \} )^{2} +(x_{I} -Im\{ \mu _{k} \} )^{2} }\\\nonumber
	x_{I}^2+x_R^2 = 1
	\end{gather}
	for unimodularity constraint and 
	\begin{gather}\label{key}
	R=\mathop{\max }\limits_{k} \sqrt{(x_{R} -Re\{ \mu _{k} \} )^{2} +(x_{I} -Im\{ \mu _{k} \} )^{2} }\\\nonumber
	x_{I}^2+x_R^2 \le a
	\end{gather}
	for constraining the PAPR. It is interesting to mention that the development of acceleration schemes (Sec. \ref{sec:III}) subject to the above-mentioned constraints are also possible. Accordingly, \eqref{21)} should be solved subject to the these constraints. 
	\begin{gather} \label{}
	\mathop{\min }\limits_{x} \mathop{\max }\limits_{k} \left|x-\mu _{k} \right|_{\infty },\nonumber\\
	|x| = 1 \;\; \texttt{or} \;\;|x|\le a
	\end{gather}
	where the first constraint is for unimodularity and the second is for constraining PAPR. Regarding unimodularity constraint, finding an approximate and fast solution can be achieved by solving it without constraint and then project it to unit circle. Accordingly, the alteration on the algorithms PMAR, POCA and RPOCA can be achieved by introducing following intermediate step between steps 3 and 4.
	
	\begin{tabular}{|c|c|}
		\hline 
		3.5 & Project $x_{n}$ to unit circle.   \hspace{3.2cm}     \\
		\hline 
	\end{tabular}

	Similarly, in the case of restricted PAPR, it is sufficient to solve \eqref{21)} without any constraints and if the solution satisfies the constraint (i.e. $ |x| \le a $), mission accomplished, else the projection onto  $ |x| = a $ circle will restrict the PAPR. Therefore, the alteration for algorithms PMAR, POCA and RPOCA can be made by introducing the following intermediate step between steps 3 and 4. 
	
	\begin{tabular}{|c|c|}
		\hline 
		3.5 & If $x_{n} > a$ then project $x_{n}$ to the $ |x| = a $ circle.\\
		\hline 
	\end{tabular}
	\begin{rem}
		By introducing unimodularity constraint, a trade-off involving cross-correlation and autocorrelation are derived \cite{Welch1974}. Consider the uni-power set of sequences $ \{x_n^m\}, n = 1,..., N, m = 1,...,M$, where
		\begin{equation}\label{key}
		\sum^N_{n=1}|x_{n}^{m}|^{2}=1,m=1,...,M .
		\end{equation}
		It is known that for such a sequence the following lower-bound exists. 
		\begin{equation}\label{}
		c_{max} \ge \sqrt{\frac{M-1}{M(2N-1)-1}},
		\end{equation}
		where, $ c_{max} $ is the maximum of the correlation side-lobes (consisting of all cross-correlation and all autocorrelation lags except zero). When sequences are unimodular, each sequence will be of power $ N $, (i.e. $ \sum^N_{n=1}|x_{n}^{m}|^{2}=N,m=1,...,M  $). Therefore, the correlation scales up by a factor of $ N $. Accordingly, the lower-bound in this case is of the form, 
		\begin{equation}\label{key}
		c_{max} \ge N\sqrt{\frac{M-1}{M(2N-1)-1}}.
		\end{equation}
	\end{rem}
	Nevertheless, all the above algorithms generate sequences with good autocorrelation.
	In the next section, by applying the modified Bernoulli system, an ``arbitrarily large" set of sequences with low cross-correlation is generated. By combining these two approaches, one can design large set of sequences with low cross-correlation and arbitrary shape of autocorrelation.
	
	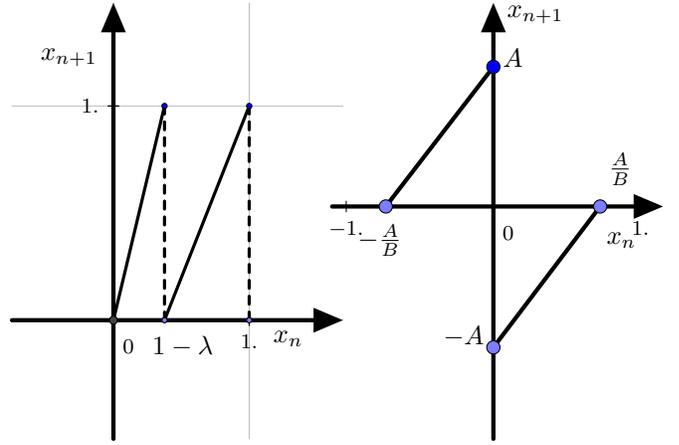
\begin{figure}\label{fig3}\hvpc
		\raggedright
		
		\definecolor{xdxdff}{rgb}{0.49019607843137253,0.49019607843137253,1.}
		\definecolor{qqqqff}{rgb}{0.,0.,1.}
		\definecolor{uuuuuu}{rgb}{0.26666666666666666,0.26666666666666666,0.26666666666666666}
		\definecolor{cqcqcq}{rgb}{0.7529411764705882,0.7529411764705882,0.7529411764705882}\usetikzlibrary{arrows}
		\begin{subfigure}[b]{0.18\textwidth}
			\begin{tikzpicture}[line cap=round,line join=round,>=triangle 45,x=1.8051679589260492cm,y=2.8496319233111502cm]
			\draw [color=cqcqcq,, xstep=1.8051679589260492cm,ystep=2.8496319233111502cm] (-0.7471455587585493,-0.5540505122049281) grid (1.6903007620911172,1.4813000720419833);
			\draw[->,ultra thick,color=black] (-0.7471455587585493,0.) -- (1.6903007620911172,0.);
			\foreach \x in {,1.}
			\draw[shift={(\x,0)},color=black] (0pt,2pt) -- (0pt,-2pt) node[below] {\footnotesize $\x$};
			\draw[->,ultra thick,color=black] (0.,-0.5540505122049281) -- (0.,1.4813000720419833);
			\foreach \y in {,1.}
			\draw[shift={(0,\y)},color=black] (2pt,0pt) -- (-2pt,0pt) node[left] {\footnotesize $\y$};
			\draw[color=black] (0pt,-10pt) node[right] {\footnotesize $0$};
			\clip(-0.7471455587585493,-0.5540505122049281) rectangle (1.6903007620911172,1.4813000720419833);
			\draw [line width=1.2pt] (0.,0.)-- (0.37624,1.);
			\draw [line width=1.2pt] (0.37624,0.)-- (1.,1.);
			\draw [line width=1.2pt,dash pattern=on 3pt off 3pt] (0.37624,1.)-- (0.37624,0.);
			\draw [line width=1.2pt,dash pattern=on 3pt off 3pt] (1.,1.)-- (1.,0.);
			\draw (-0.6033535516236147,1.3099619772725792) node[anchor=north west] {$x_{n+1}$};
			\draw (1.1109504413927537,-0.00865084210210264) node[anchor=north west] {$x_{n}$};
			\draw (0.21797106334632738,-0.02822693676928266) node[anchor=north west] {$1-\lambda$};
			\begin{scriptsize}
			\draw [fill=uuuuuu] (0.,0.) circle (1.5pt);
			\draw [fill=qqqqff] (0.37624,1.) circle (1.0pt);
			\draw [fill=xdxdff] (0.37624,0.) circle (1.0pt);
			\draw [fill=qqqqff] (1.,1.) circle (1.0pt);
			\draw [fill=xdxdff] (1.,0.) circle (1.0pt);
			\end{scriptsize}
			\end{tikzpicture}
		\end{subfigure}
		\hfil
		\begin{subfigure}[b]{0.18\textwidth}
			\definecolor{xdxdff}{rgb}{0.49019607843137253,0.49019607843137253,1.}
			\definecolor{qqqqff}{rgb}{0.,0.,1.}\usetikzlibrary{arrows}
			\begin{tikzpicture}[line cap=round,line join=round,>=triangle 45,x=1.9563943211003074cm,y=3.4480546272064925cm]
			\draw[->,ultra thick,color=black] (-1.095939250132553,0.) -- (1.1530961066688152,0.);
			\foreach \x in {-1.,1.}
			\draw[shift={(\x,0)},color=black] (0pt,2pt) -- (0pt,-2pt) node[below] {\footnotesize $\x$};
			\draw[->,ultra thick,color=black] (0.,-0.89668031103782) -- (0.,0.7854276098273958);
			\foreach \y in {}
			\draw[shift={(0,\y)},color=black] (2pt,0pt) -- (-2pt,0pt) node[left] {\footnotesize $\y$};
			\draw[color=black] (0pt,-10pt) node[right] {\footnotesize $0$};
			\clip(-1.095939250132553,-0.89668031103782) rectangle (1.1530961066688152,0.7854276098273958);
			\draw (0.03101772908895698,0.8094087818152909) node[anchor=north west] {$x_{n+1}$};
			\draw (0.7067153901115547,-0.062802829198256581) node[anchor=north west] {$x_{n}$};
			\draw [line width=1.6pt] (0.,0.53876)-- (-0.7311782344953831,0.);
			\draw [line width=1.6pt,] (0.725509634115689,0.)-- (0.,-0.5438144946445058);
			\draw (0.7189005605618267,0.24413829924347702) node[anchor=north west] {$\frac{A}{B}$};
			\draw (-0.9837314123745671,-0.03173363306497724) node[anchor=north west] {$-\frac{A}{B}$};
			\draw (0.0017461192390476347,0.6449664596125814) node[anchor=north west] {$A$};
			\draw (-0.40072718254334896,-0.4273345164175868) node[anchor=north west] {$-A$};
			\begin{scriptsize}
			\draw [fill=qqqqff] (0.,0.53876) circle (2.5pt);
			\draw [fill=xdxdff] (-0.7311782344953831,0.) circle (2.5pt);
			\draw [fill=xdxdff] (0.725509634115689,0.) circle (2.5pt);
			\draw [fill=xdxdff] (0.,-0.5438144946445058) circle (2.5pt);
			\end{scriptsize}
			\end{tikzpicture}
		\end{subfigure}

		\caption{\small Comparison of the phase space of modified Bernoulli system (right), with phase space of Bernoulli System (left)}
		\vspace*{-1\baselineskip}
		\vspace{-.5pc}
		
	\end{figure}
	\begin{figure*}\mvpc\mvpc\hvpc
		\centering
		\includegraphics[trim = 8mm 100mm 8mm 18mm, clip,width=180mm,height = 50mm]{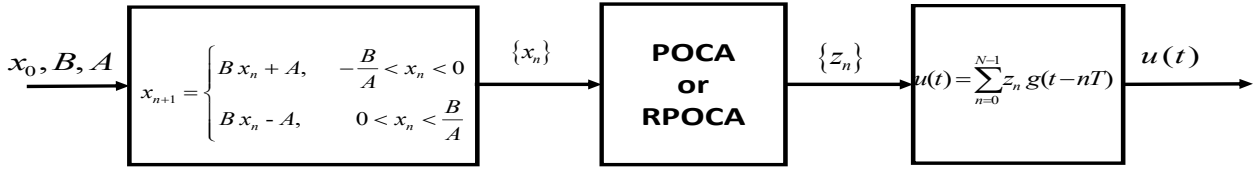}
		\label{fig4}
		\vspace*{-4\baselineskip}
		\caption{\small Baseband processing block diagram of MIMO radar at each antenna}\mvpc\hvpc
	\end{figure*}\mvpc
	
	\section{Modified Bernoulli Map}\label{sec:IV}
	Bernoulli shift map belongs to a family called piecewise linear maps, where its elements consist of a number of piecewise linear segments \cite{Ott2002,Brown1998}. Due to its chaotic nature, sequences generated from this map are highly sensitive to initial conditions.
	Therefore, by changing its initial condition, very different sequences can be produced (see \cite{Willsey2011a}). Bernoulli map is renowned for its simple structure consisting of two linear segments as follows:
	
	\begin{equation} \label{ZEqnNum605756}
	x_{n+1} =\left\{\begin{array}{c} {\begin{array}{cc} {\frac{x_{n} }{\lambda } } & {\; \; \; \; \; \; \; \; \; \; \; \; \; 0<x_{n} <} \end{array}1-\lambda } \\
	{\begin{array}{cc} {\frac{x_{n} -\left(1-\lambda \right)}{\lambda } }  \; \; \; \;& {1-\lambda <x_{n} <1} \end{array}} \end{array}\right.
	\end{equation}
	The system defined above is commonly acknowledged as Bernoulli system in its special case of $ \lambda =1/2$, represented in the form,
	\begin{equation} \label{32)} x_{n+1} =2\; x_{n} \; \, \, \, {\rm mod}\; \, \, 1,\, \, x_{0} \in \left(0,1\right). \end{equation}
	
	The classical form of Bernoulli map given in \eqref{ZEqnNum605756} is not appropriate to be applied here, since it is observed that, the signals generated from this map are all positive which result in a non-uniform high cross-correlation. To alleviate this phenomenon, the origin of the phase space of this dynamical system to is changed $\left(\frac{1}{2} ,{\it \; }1-{\it \lambda }\right)$, hence the following dynamical system, a modified version of Bernoulli system.
	\begin{equation}  \label{ZEqnNum736077}
	x_{n+1} =\left\{\begin{array}{c} {\begin{array}{cc} {Bx_{n} +A,} & {\; \; -\frac{B}{A} <x_{n} <\; } \end{array}0} \\ {\begin{array}{cc} {Bx_{n} -A,} & {\; \; \, \, \, \, \, 0<x_{n} <\frac{B}{A} } \end{array}} \end{array}\right.    \end{equation}
	\textbf{Fig. 3}  illustrates the phase space of the modified Bernoulli system along with original Bernoulli system. Note that in addition to the change in the center of the phase space, the slopes of the lines are altered as well. In \textbf{Theorem} \ref{th1}, it is revealed that the modified Bernoulli system can indeed produce chaos.\hvpc
	\begin{theorem}\label{th1}
		The modified Bernoulli system is a chaotic system for $B>1.$
	\end{theorem}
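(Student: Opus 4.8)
The plan is to prove that the map $f$ defined in \eqref{ZEqnNum736077} is chaotic in the sense of Devaney on its natural invariant set, i.e. that it is topologically transitive, has a dense set of periodic points, and displays sensitive dependence on initial conditions; this in particular yields the aperiodicity of typical orbits and the boundedness that were cited earlier as the defining features of a chaotic generator. By the Banks--Brooks--Cairns--Davis--Stacey theorem, sensitivity is automatic on an infinite space once transitivity and density of periodic points hold, so the real work is those two properties. The structural reason everything goes through is that $f$ is piecewise affine with exactly two linear branches, joined at $x=0$, each of constant slope of magnitude $B>1$; hence $f^{n}$ is piecewise affine with slope magnitude $B^{n}$, the diameters of the $n$-step cylinders shrink to $0$ geometrically, and the Lyapunov exponent along every orbit equals $\ln B>0$.

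First I would fix a trapping interval $I$ around the origin, pass to the maximal invariant set $\Lambda=\bigcap_{n\ge 0}f^{-n}(I)$, and record that $f|_{\Lambda}$ has Lyapunov exponent $\ln B>0$ everywhere. This gives exponential divergence of nearby orbits and therefore sensitive dependence; since it is essentially free, I would dispose of it immediately.

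Next I would set up symbolic dynamics. Splitting $I$ into the two branch-domains $I_{-},I_{+}$ and coding each point $x$ by its itinerary $\pi(x)=(s_{0},s_{1},\dots)\in\{-,+\}^{\mathbb N}$, the expansion $B>1$ forces the $n$-cylinders to contract, so $\pi$ is a continuous surjection, finite-to-one off the countable grand orbit of the breakpoint $x=0$, which intertwines $f$ with the full one-sided shift $\sigma$ on $\{-,+\}^{\mathbb N}$. Because $\sigma$ is transitive with dense periodic points, these transfer to $f$: periodic symbol sequences pull back to periodic orbits, which are dense, and a $\sigma$-dense forward orbit pulls back to an $f$-dense one. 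In the special case $B=2$ (which is \eqref{ZEqnNum605756} with $\lambda=1/2$) this reduces to the textbook statement via the explicit affine conjugacy $h(x)=(x+A)/(2A)$ with the doubling map $y\mapsto 2y\bmod 1$; for general $B>1$ one conjugates instead to the generalized Bernoulli map \eqref{ZEqnNum605756} with $\lambda=1/B$, or argues the covering structure directly — any subinterval $J$ obeys $|f(J)|=B|J|$ until its image first straddles $0$, after which one of its halves is again an expanded interval, so some iterate $f^{m}(J)$ contains a full branch of $f$ and hence all of $I$, giving transitivity, and the intermediate-value theorem on each full branch of $f^{m}$ then produces a periodic point inside every $J$.

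The main obstacle is the bookkeeping at the breakpoint $x=0$ together with the fact that, unless $B=2$, the two affine branches need not each map onto all of $I$, so not every branch of $f^{n}$ is ``full''. One must therefore work on $\Lambda$ (or simply observe that the union of the two branch images is $I$, so $f$ is onto) and track only those sub-branches of $f^{n}$ whose domains avoid the finitely many preimages of $0$, checking that enough of them stay full to cover $I$. Once this combinatorial point is settled, transitivity, density of periodic points, and sensitivity all follow as above, establishing that the modified Bernoulli system is chaotic for every $B>1$.
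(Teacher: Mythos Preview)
Your proposal is correct in spirit but takes a substantially different route from the paper. The paper adopts the working criterion ``positive Lyapunov exponent $\Rightarrow$ chaotic'' for one-dimensional maps and simply computes, by successive substitution, that $f^{n}(x)=B^{n}x+K$, whence
\[
\lambda=\lim_{n\to\infty}\frac{1}{n}\log\Bigl|\frac{d}{dx}f^{n}(x)\Bigr|=\log|B|>0\quad\text{for }B>1.
\]
That is the entire argument: no transitivity, no periodic points, no symbolic dynamics.

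What you do differently is to take Devaney's definition seriously and build the full topological picture (itinerary coding, cylinder contraction, transfer of transitivity and dense periodic orbits from the full shift, Banks et al.\ for sensitivity). This buys you a genuinely rigorous statement with a precise meaning of ``chaotic,'' and it forces you to confront the structural issue you flag at the end---that for $B\neq 2$ the two affine branches need not each be onto, so one must work on the maximal invariant set $\Lambda$ and keep track of which sub-branches of $f^{n}$ are full. The paper's approach sidesteps all of this by treating positivity of the Lyapunov exponent as the \emph{definition} of chaos, which makes the proof a two-line derivative computation but leaves the topological dynamics (and even the question of which invariant set one is talking about) unexamined. Your route is mathematically more satisfying; theirs is what an applied paper would typically do.
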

	\begin{proof}
		In order to maintain that the one-dimensional system
		\begin{equation} \label{34)}
		x_{n+1} =f\left(x_{n} \right),\; \; x\in I
		\end{equation}
		is chaotic, its being of positive Lyapunov exponent, defined by (see \cite{Collet2008} and references therein) must be proved,
		\begin{equation} \label{35)}
		\lambda =\mathop{\lim }\limits_{n\to \infty } \frac{1}{n} {\rm \; log\; |\; }\frac{d}{dx} f^{n} (x){\rm |},
		\end{equation}
		where $f^{n}(x) \triangleq\overbrace{f(f...(f}^{n\, \, times}(x)))$. In case of modified Bernoulli system by successive substitution,
		\begin{equation} \label{36)} f^{n} (x)=B^{n} x+K, \end{equation}
		is yield, where, $K$ is a constant depending solely on $B,\, A$ and $n$. Therefore,
		\begin{equation} \label{37)}
		\lambda =\mathop{\lim }\limits_{n\to \infty } \frac{1}{n} {\rm \; log\; |\; }\frac{d}{dx} f^{n} (x){\rm |}=\log |B|.
		\end{equation}
		Now, by selecting $B>1$, the system's Lyapunov exponent becomes positive.
	\end{proof}

	Chaotic waveforms possess many appropriate radar properties. In fact, the autocorrelation and cross-correlation, due to their aperiodicity and sensitivity to the initial condition, are usually low.  Besides, they can be generated with a very low computational burden for any length and quantity.

	By choosing different initial conditions, modified Bernoulli system can produce waveforms with very low cross-correlation. By applying the algorithms introduced in the previous section, the autocorrelation side-lobes levels of the waveforms can be enhanced. In order to have a proper set of sequences, it is sufficient to generate random sequences from the modified Bernoulli system beginning different initial conditions and then applying one of the algorithms introduced in previous sections (i.e. PMQA, PMAR, POCA or RPOCA). A block diagram is proposed for baseband waveform generation at the transmitter side in each antenna in \textbf{Fig. 4}.
	\hvpc
	
	\section{Simulation Results} \label{sec:V}

	\subsection{Autocorrelation Performance}
	
	In this section, the proposed methods in previous sections are evaluated. Subsequently, in order to be able to compare the autocorrelation, some metrics are defined herein. The normalized autocorrelation is defined according to,
	\begin{multline}\label{ZEqnNum228643}
	Normalized\, Autocorrelation\; \; =20\, \, {\rm log}_{10} \left|\frac{r_{k} }{r_{0} } \right|,
	\\k=1,...,N.
	\end{multline}
	This metric resembles the correlation level given in \cite{Stoica2009,Song2015}. The name ``Normalized Autocorrelation" is preferred here to avoid confusion with cross-correlation. Afterwards, the peak correlation level (PCL) is,
	\begin{equation} \label{ZEqnNum339907}
	PCL\; =20\, \, {\rm log}_{10} \left|\frac{\max r_{k} }{r_{0} } \right|.
	\end{equation}
	
	When suppressing a specified part of the autocorrelation is of concern, in \cite{Stoica2009} a useful metric named the modified merit factor (MMF) is developed,
	\begin{equation} \label{ZEqnNum569172}
	MMF=\frac{N^{2} }{2\mathop{\sum }\nolimits_{i=1}^{Q-1} |r_{{i} } |^{2} } .
	\end{equation}
	The equivalent merit factor of the suppressed part of the autocorrelation is measured through this metric. Similarly, a modified peak correlation level (MPCL) is defined according to
	\begin{equation} \label{ZEqnNum826115}
	MPCL=\left|\frac{\max\{ r_{i }|i=1,...,Q-1\} }{r_{0} } \right|.
	\end{equation}
	
	Note that MPCL is similar to PCL when suppressing all autocorrelation side-lobes is of concern. The merits defined in \eqref{ZEqnNum228643} and \eqref{ZEqnNum569172} are chosen from  \cite{Song2015,Stoica2009,He2009} in order to compare newly developed algorithms with theirs on their footings. The merits in \eqref{ZEqnNum339907} and \eqref{ZEqnNum826115} are defined here to minimize peak side-lobe level. Hereafter, we compare POCA and RPOCA with other methods. The CAN and WeCAN are selected from \cite{Stoica2009}, the ``Monotonic minimizer for Weighted ISL'' (MWISL) and the ``monotonic minimizer (MM) for $l_{p}$" are chosen from \cite{Song2015}. The CAN, WeCAN and MWISL minimize ISL, and ``MM for $l_{p} $" minimizes  $l_{p} $-norm on side-lobes. Like \cite{Song2015}, very large $p$ is considered for ``MM for $l_{p} $"  in order to approximate the Chebyshev norm, with $p$ set at 10000 in specific. For both the ``MM for $l_{p} $" and MWISL the iteration counter is set to $10^{6} $ (which is way beyond the suggestions in \cite{Song2015}) to ensure the convergence. Unless specified otherwise, all the algorithms are initialized by Golomb sequence, defined by,
	\begin{equation} \label{42)}
	G(n)=e^{(j(n-1)n\pi /N)} ,\; n=1,...,N.
	\end{equation}
	Finally, it should be mentioned that all the following simulations are implemented through MatLab on an i7 3.2 GHz machine with 6 GBytes of RAM.
	\begin{enumerate}
		\item Comparison with Barker
	\end{enumerate}
	\begin{figure}\mvpc
		\centering
		\includegraphics[width=90mm,height=.7\linewidth]{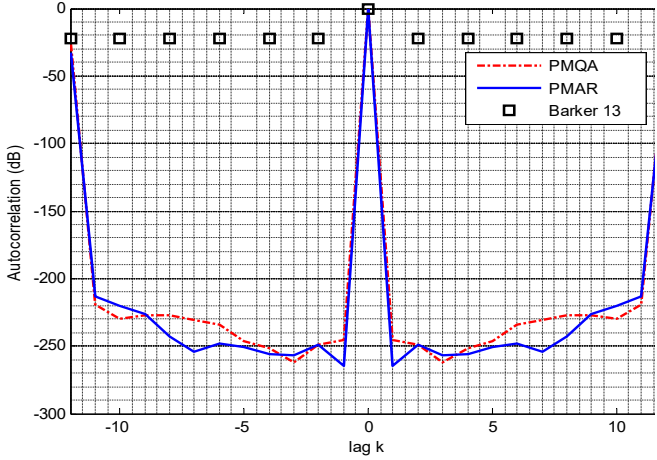}
		\caption{\small Comparison for PMQA, PMAR and Barker, all at length 13}\label{fig5}
		\mvpc\hvpc
	\end{figure}

	One of the widely embraced sequences in the context of the good autocorrelation sequences is Barker. Barker is the most acknowledged member of the bi-phase sequences.
	Although it is not fair to compare a bi-phase sequence to a polyphase sequence, due to its prevalent application, comparison with Barker is illustrative.
	The longest Barker sequence is of the length 13 and is as follows,
	\begin{equation} \label{43)}
	x = [1\; 1\; 1\; 1\; 1\; -1\; -1\; 1\; 1\; -1\; 1\; -1\; 1].
	\end{equation}
	The comparison between the autocorrelation of Barker 13, PMQA and PMAR is depicted in \textbf{Fig.} \ref{fig5} for parameters $\varepsilon =10^{-12} $, $N=13$, $Q=12.$
	The fact that both algorithms can suppress the autocorrelation side-lobes to almost zero in wide range of correlation lags is revealed by this figure.
	Here, the difference is in the consumed time, since  2.26 sec and 1.87 sec are consumed by PMQA and PMAR to generate this figure.
	In the experimentations hereafter the focus is on POCA and RPOCA, the reason is being the degree of approximations applied in deriving them.
	Accordingly, their applicability is asserted through various scenarios.
	
	\begin{figure}
		\centering
		\includegraphics[width=.51\textwidth,height = .7\linewidth]{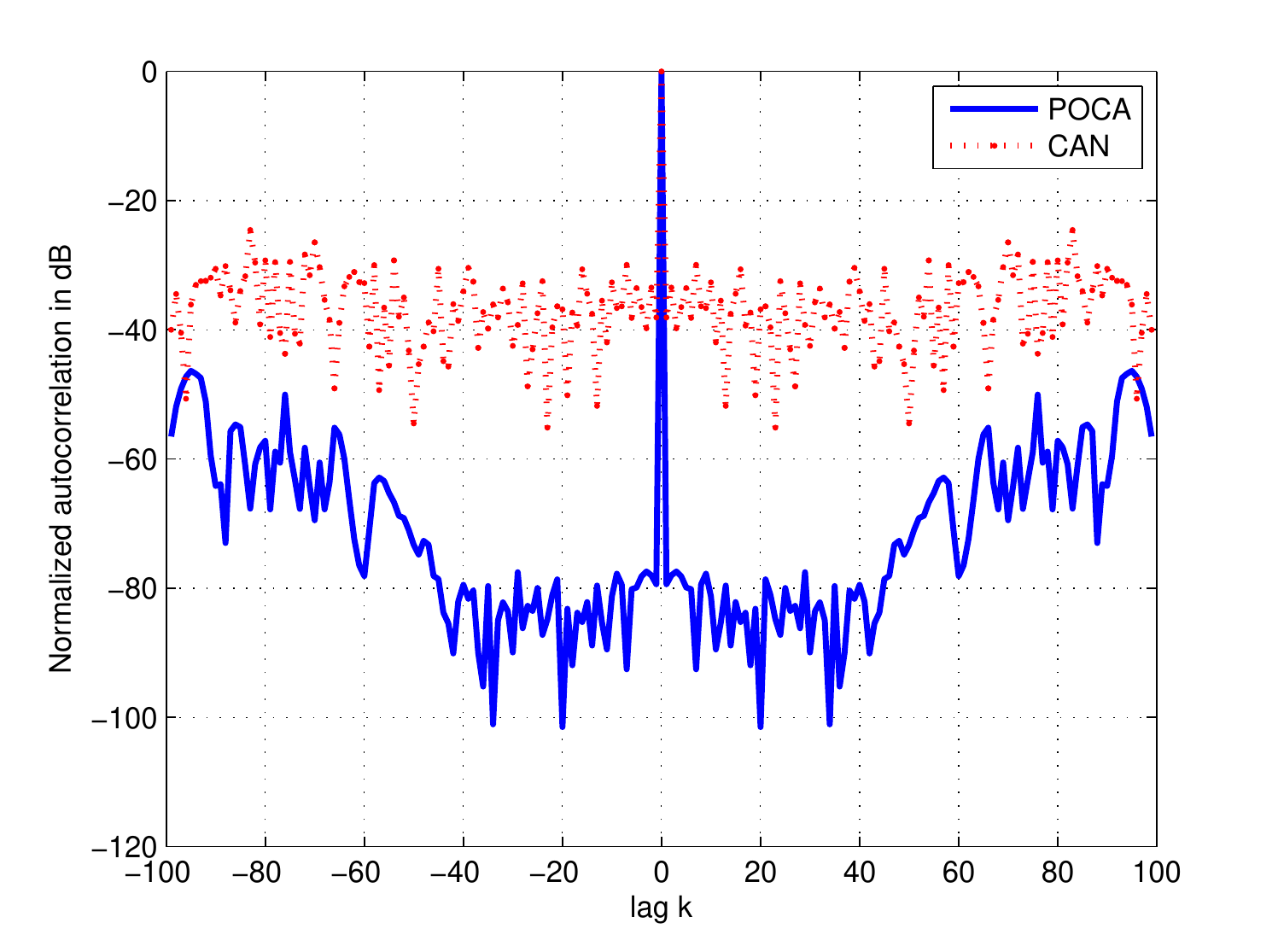}
		\caption{\small Normalized autocorrelation of POCA vs. CAN\cite{He2009}}\label{fig6}
	\mvpc\hvpc
	\end{figure}
	
	\begin{figure}
		\vspace{-.8pc}
		\centering
		\includegraphics[width=.51\textwidth,height = .7\linewidth]{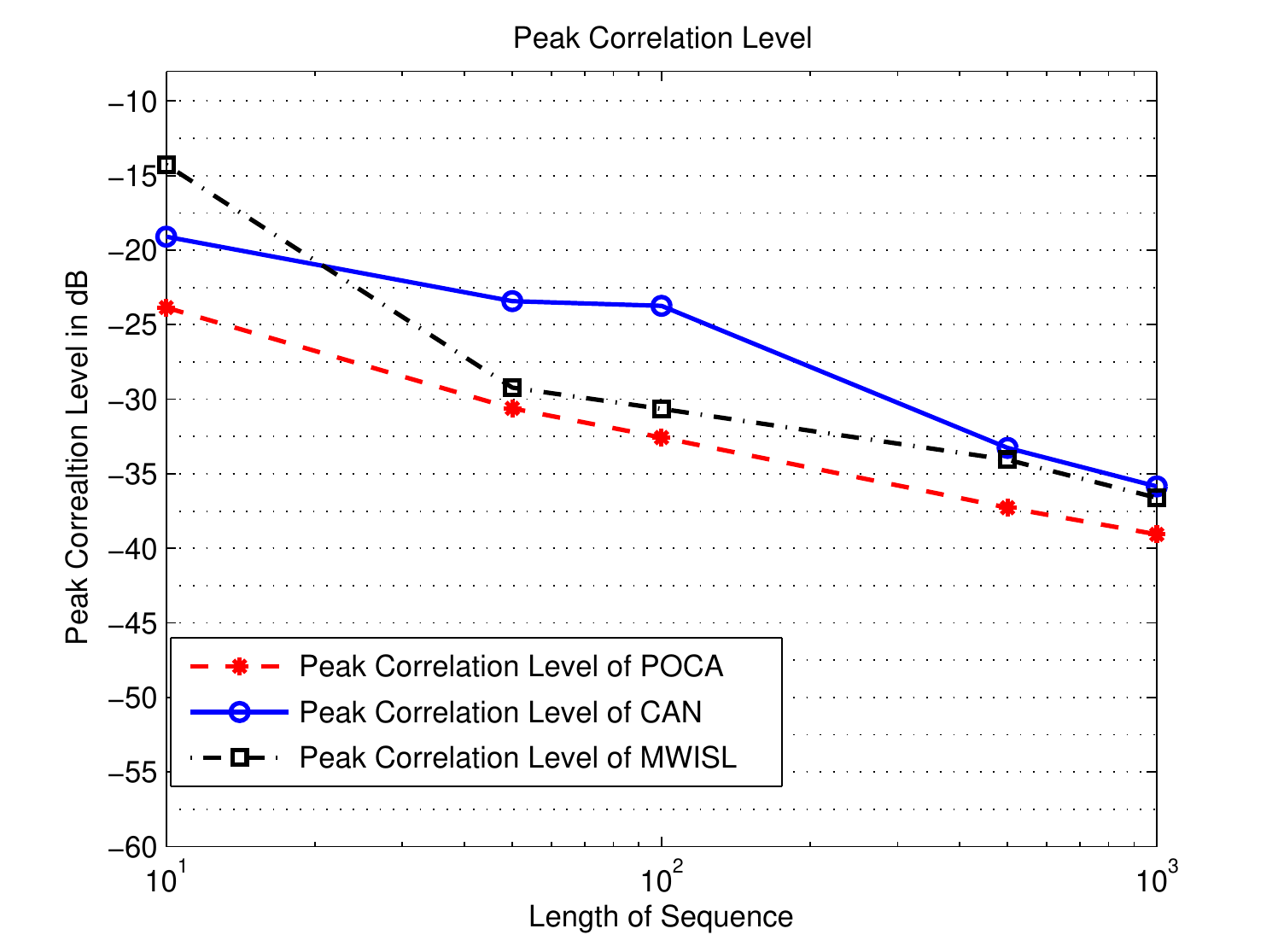}\hvpc
		\caption{\small Comparison of PCL of CAN, POCA and MWISL algorithms}\label{fig7}
		\vspace{-1pc}
	\end{figure}

	\begin{enumerate}[resume]
		\item Suppressing all side-lobes
	\end{enumerate}
	
	First, consider a scenario where suppressing all the side-lobes of the sequence is of equal importance.
	For instance, consider a case where designing a sequence with length $N=100$ is required, thus,
	\begin{equation} \label{ZEqnNum954109} \left\{\begin{array}{c} {Q=100\, \, \, } \\ { N=100 \, \, } \end{array}\right. ,
	\end{equation}
	In this scenario, $\bT=\bI_{N\times N}.$  The comparison between POCA and CAN algorithms  for $N=100$ is depicted in \textbf{Fig.} \ref{fig6}.
	The CAN algorithm is the best performing algorithm amongst several algorithms developed in \cite{Stoica2009}: CAP, CAN, and WeCAN, while, it lacks the ability to suppress some specified part of the autocorrelation, where, WeCAN is developed there to accomplish this task. Note that WeCAN and CAN have the same functionality when suppression all side-lobes is contemplated.

	The CAN, POCA and MWISL are compared in terms of the PCL in
	\textbf{Fig.} \ref{fig7} for different lengths for the same scenario as in \eqref{ZEqnNum954109}. It is observed that POCA easily beats MWISL and CAN with respect to PCL, where, the supremacy of POCA over CAN is of order 5 dB.
	This figure is generated by setting equal terminating points for both CAN and POCA.

	\begin{figure}\mvpc\hvpc
		\hspace{-1.5pc}
		\includegraphics[width=1.1\linewidth]{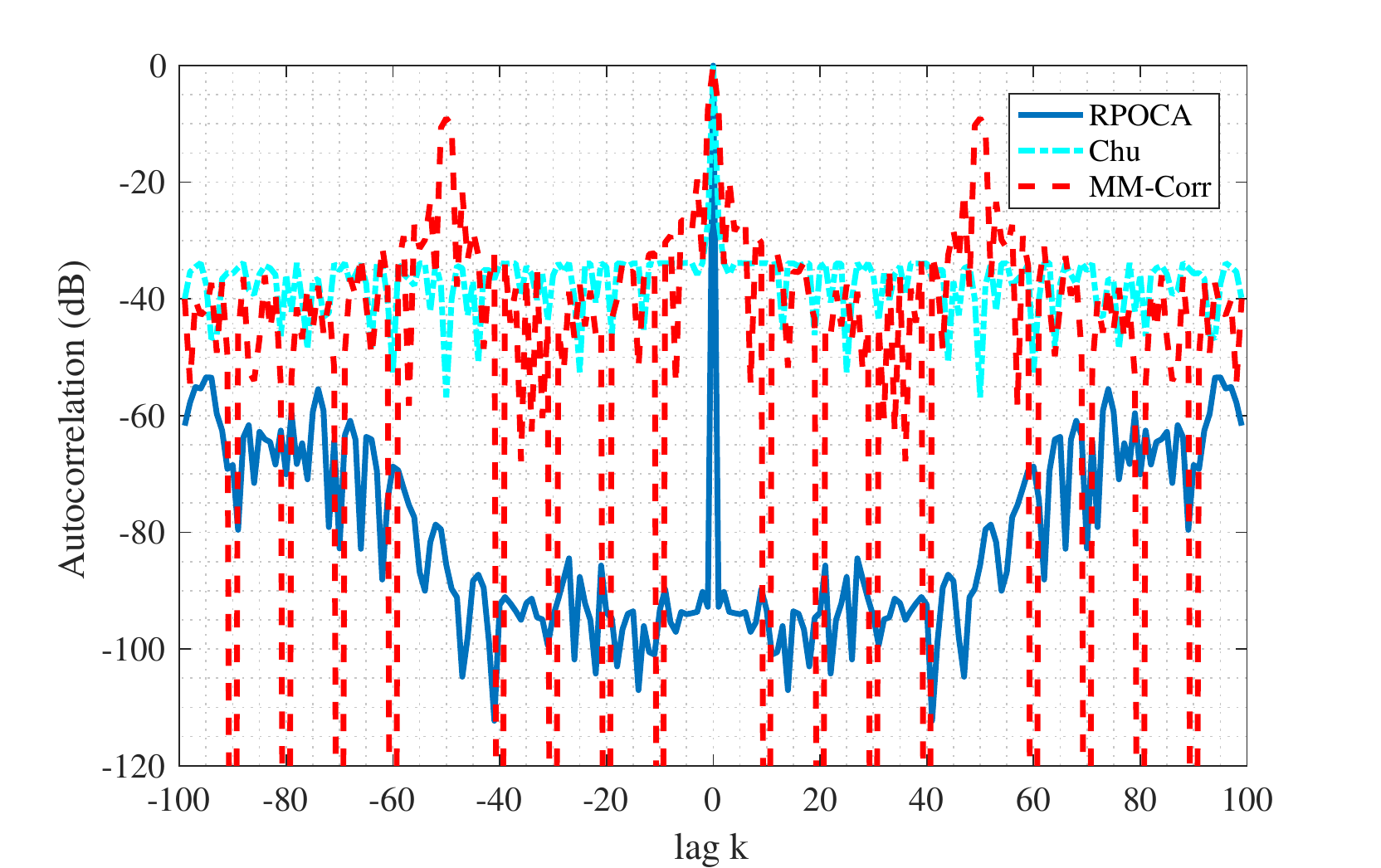}
		\caption{\small Comparison between the Normalized Autorotation of monotonic minimizer (MM)  for PSL minimization, RPOCA, and Chu sequence, for $N=100$}
		\label{fig8}
		\vspace{-1.5pc}
	\end{figure}
	
	Golomb sequence belongs to the family of polyphase sequences.
	Another member of this family is the so-called Chu sequence, which is considered to be the best amongst them ISL (refer to \cite{Mercer2013} and references therein).
	This sequence is defined according to
	\begin{equation} \label{45)} C(n)=\left\{\begin{array}{c} {e^{(j(n-1)^{2} \pi /N)} \, \, \, \, \, {\rm if}\, n\, \, {\rm is}\, \, even\, \, } \\ {e^{(j(n-1)n\pi /N)} \, \, \, \, \, {\rm if}\, n\, \, {\rm is}\, \, odd\, \, \, \, } \end{array}\right. ,n=1,...,N.
	\end{equation}
	The comparison between ``MM for $l_{p} $", RPOCA, and Chu sequence is demonstrated in \textbf{Fig.} \ref{fig8} for the same scenario as above (i.e. equation \eqref{ZEqnNum954109}),
	indicating  that, in terms of autocorrelation side-lobe levels RPOCA defeats state-of-the-art methods. 
	Like in \cite{Song2015}, the $p$ is set at 10000 in order to approximate the PSL minimization.
	\begin{figure}
		\includegraphics[width=1\linewidth,height=.6\linewidth]{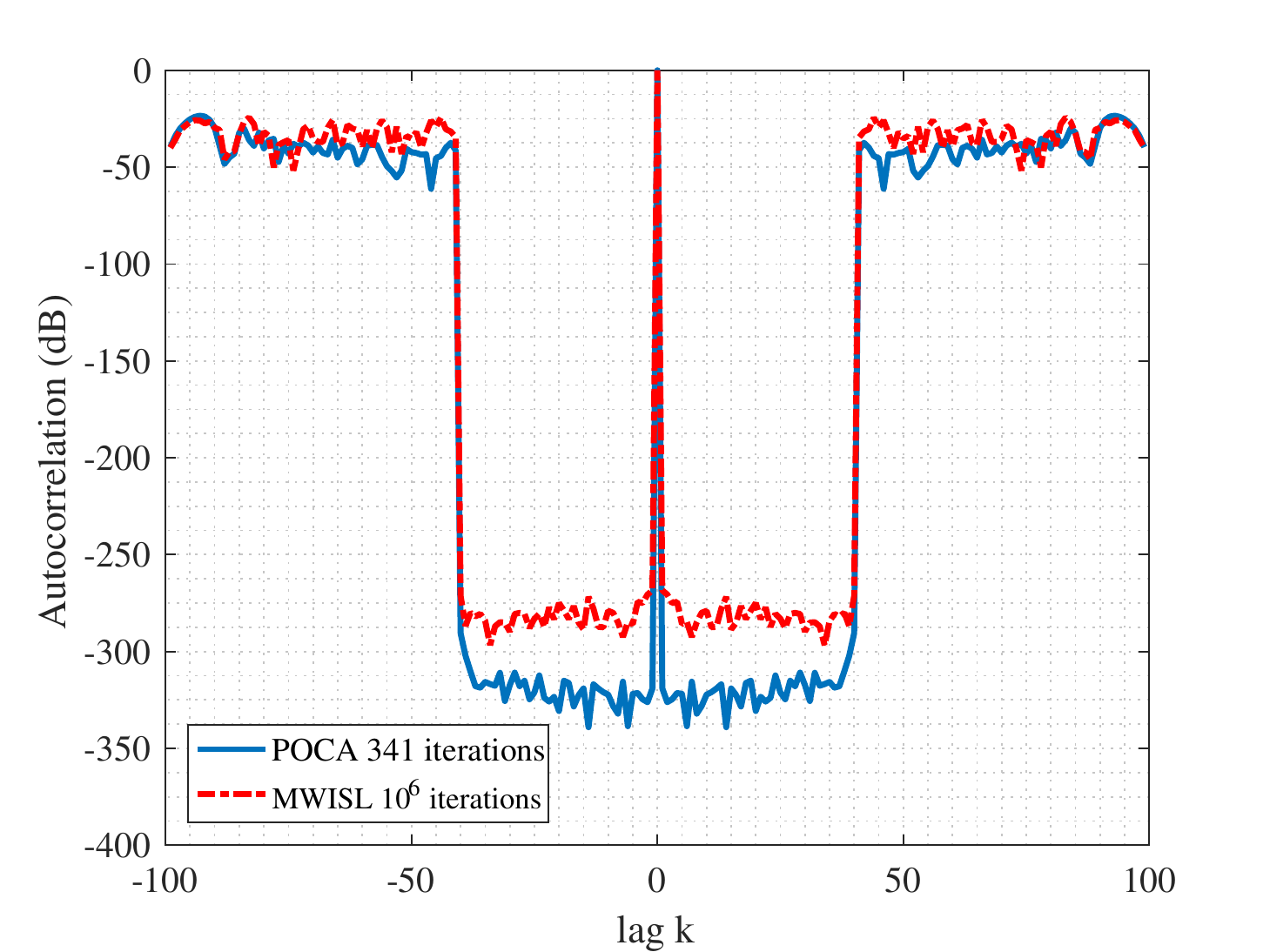}
		\vspace{-.5pc}\mvpc
		\caption{\small Normalized autocorrelation of POCA for $N=100,\,\,Q=39$(341 iterations), and MWISL \cite{Song2015} the same scenario ($10^6$ iterations)}
		\label{fig9}
		\vspace{-1.5pc}
	\end{figure}
	
	\begin{enumerate}[resume]
		\item Suppressing Less Than Half of the Side-lobes
	\end{enumerate}
	
	In the second scenario,  a situation is considered, where,  suppressing a specified part of the autocorrelation is  of concern. Consider the example given in equation (66) of \cite{Stoica2009}, where suppression of  $ r_{1} ,...,r_{39}$ is required when $N=100$, that is,
	\begin{equation} \label{ZEqnNum328910} \left\{\begin{array}{c} {Q=39\, \, \, } \\ { N=100  \, \, } \end{array}\right. .
	\end{equation}
	
	In \cite{Stoica2009}, the CAP+WeCAN is revealed to be the best-performing configuration (Refer to Fig. 4.b of \cite{Stoica2009}. Re-picturing is avoided to preserve brevity).
	The POCA's output initialized by modified Bernoulli is illustrated in \textbf{Fig.} \ref{fig9}  in companion with MWISL's output for the same scenario, where, the smallest correlation level and the peak correlation level in the suppressing region for POCA is -340 dB and -308 dB, respectively.
	In comparison, the corresponding figures for CAP+WeCAN are -320 and -280 dB, respectively, indicating an improvement of more than 20 dB. 
	In this figure the number of iterations for MWISL is set to $10^{6} $ to ensure convergence.
	In comparison, the required number of iterations for POCA algorithm in \textbf{Fig.}  \ref{fig9} is 341($\varepsilon =10^{-14} $). It is obvious that the supremacy of POCA to MWISL is more than 40 dB.
	
	The comparison of the MMF and MPCL figures for WeCAN initialized by CAP (CAP+WeCAN), POCA and RPOCA initialized by Modified Bernoulli (POCA+Modified Bernoulli and RPOCA+Modified Bernoulli) as well as the CAP algorithm are tabulated in Table \ref{table1}. In this table MPCL and MMF (Eq. \eqref{ZEqnNum569172},\eqref{ZEqnNum826115}) numbers are in magnitude. 
	From numbers in Table \ref{table1} it is deduced that, although POCA and RPOCA minimize PSL they show superior merit factors or equivalently ISL,
	that is, by decreasing peak of the side-lobes (or PSL), all the side-lobes and therefore their power integration (or ISL) would be decreased. This phenomenon is proved by authors for asymptotic case (i.e. for very large sequences) in \cite{H.E.Najafabadi2015}.
	The table provides a comparison between consumed time in seconds for each one of the aforementioned methods, as well. The RPOCA and POCA both consumes less than 3 sec, in comparison with more than 140 sec by the other methods. Note that the number for RPOCA is not less than POCA since the sequence is not long enough. In fact, the RPOCA is superior to POCA regarding time for large values of matrix dimension where the speedup of the fast SVD dominates time added by the additional steps in RPOCA.
	\begin{table}
		\centering 
		\caption{\small Comparison of MPCL and MMF for the scenario given in \eqref{ZEqnNum328910}}\label{table1}
		\begin{tabular}{|m{0.4in}|m{0.6in}|m{0.5in}|m{0.5in}|m{0.5in}|}  \hline
			& WeCAN +CAP & POCA +Modified Bernoulli & RPOCA +Modified  Bernoulli & CAP \\ \hline
			MPCL & $1.12\times 10^{-14}$ & $2.96\times10^{-15}$ & $3.096\times10^{-15}$ & $2.23\times10^{-13}$ \\ \hline
			MMF & $2.37\times10^{26}$ & $5.20\times10^{32}$ & $4.54\times10^{32}$ & $1.08\times10^{23}$ \\ \hline
			Consumed Time (s) & $ >140 $ & $ 2.36  $& $ 2.607 $ & $ 140.06 $ \\ \hline
		\end{tabular}
		\mvpc
	\end{table}

	\mvpc

	\begin{enumerate}[resume]\vpc
		\item Suppressing More Than Half of the Side-lobes
	\end{enumerate}
	
	As mentioned in \cite{Stoica2009}, the CAP and CAP+WeCAN algorithms are able to provide "almost zero" autocorrelation side-lobes "just" when suppressing less than half of the side-lobes are of concern. And if suppression of more than half is taken into account, then the side-lobes for either CAP or WeCAN becomes higher. According to the assessment made here, this holds for MWISL as well. 
	The reason behind this phenomenon is that the problem is formulated with unimodularity (a desirable but limiting) constraint. This additional constraint results in the elimination of the half of the degrees of freedom in nullifying autocorrelation values.
	Luckily, this limitation does not hold for PMQA, PMAR, POCA, and RPOCA, except when suppressing all the side-lobes is of concern, where the newly developed algorithms cannot provide almost zero side-lobes as well.
	Note that the scenario in \textbf{Fig.} \ref{fig5} affirms this fact for the first two algorithms.
	To illustrate this fact for POCA and RPOCA, several scenarios are contemplated here. 
	In the first, suppression of $r_{1} ,...,r_{64}$ at $N=100$ is envisioned and bearing in mind that typical surveillance pulse Doppler radars commonly utilize short or medium-size sequences for pulse compression, in the additional scenarios, the suppression of $\; r_{1} ,...,r_{32} $ at $N=40$ and suppression of $\; r_{1} ,...,r_{17} $ at $N=20$ are examined. \textbf{Fig.} 10 illustrates these scenarios.
	
	The question at hand now is that how much is the effect of chaotic sequence initialization on the autocorrelation side-lobes. 
	To answer this question, a 'suppressing more than half of the side-lobes' scenario is contemplated due to the fact that it has more practical value.
	Accordingly, the comparison for $ N=20,Q= 18 $ is illustrated in \textbf{Fig.} \ref{fig:pocavspocabernoulli}. 
	From this comparison, it is observed that the POCA initialized by modified Bernoulli performs slightly better (about 4 dB) for most of the suppression lags.
	\begin{figure}
		\centering
		\includegraphics[width=\linewidth,height=.7\linewidth]{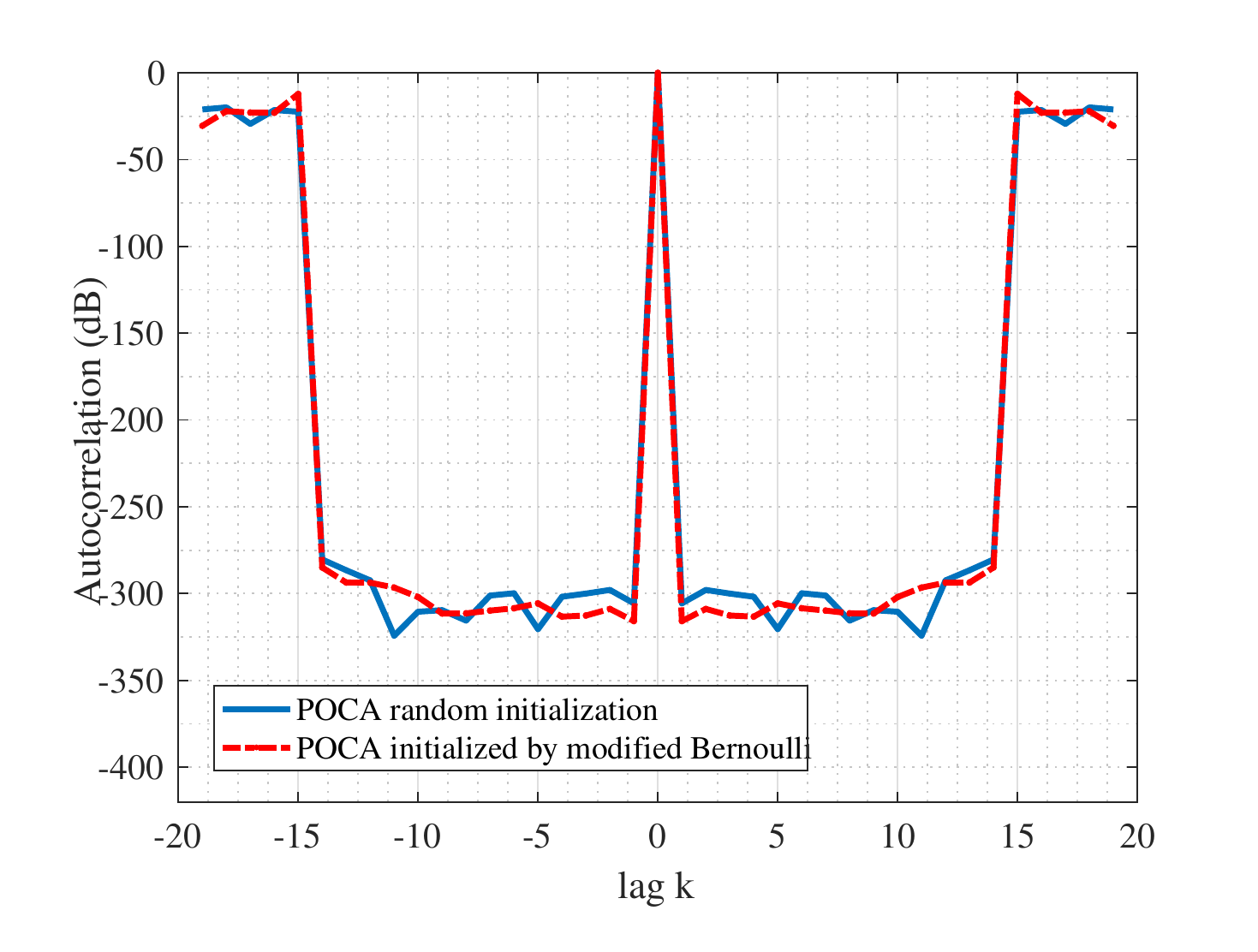}
		\mvpc\mvpc
		\caption{Comparison between POCA with random initialization and POCA initialized with modified Bernoulli}
		\label{fig:pocavspocabernoulli}\mvpc\hvpc
	\end{figure}
	\begin{figure}\mvpc
		\includegraphics[width=1\linewidth,height=.7\linewidth]{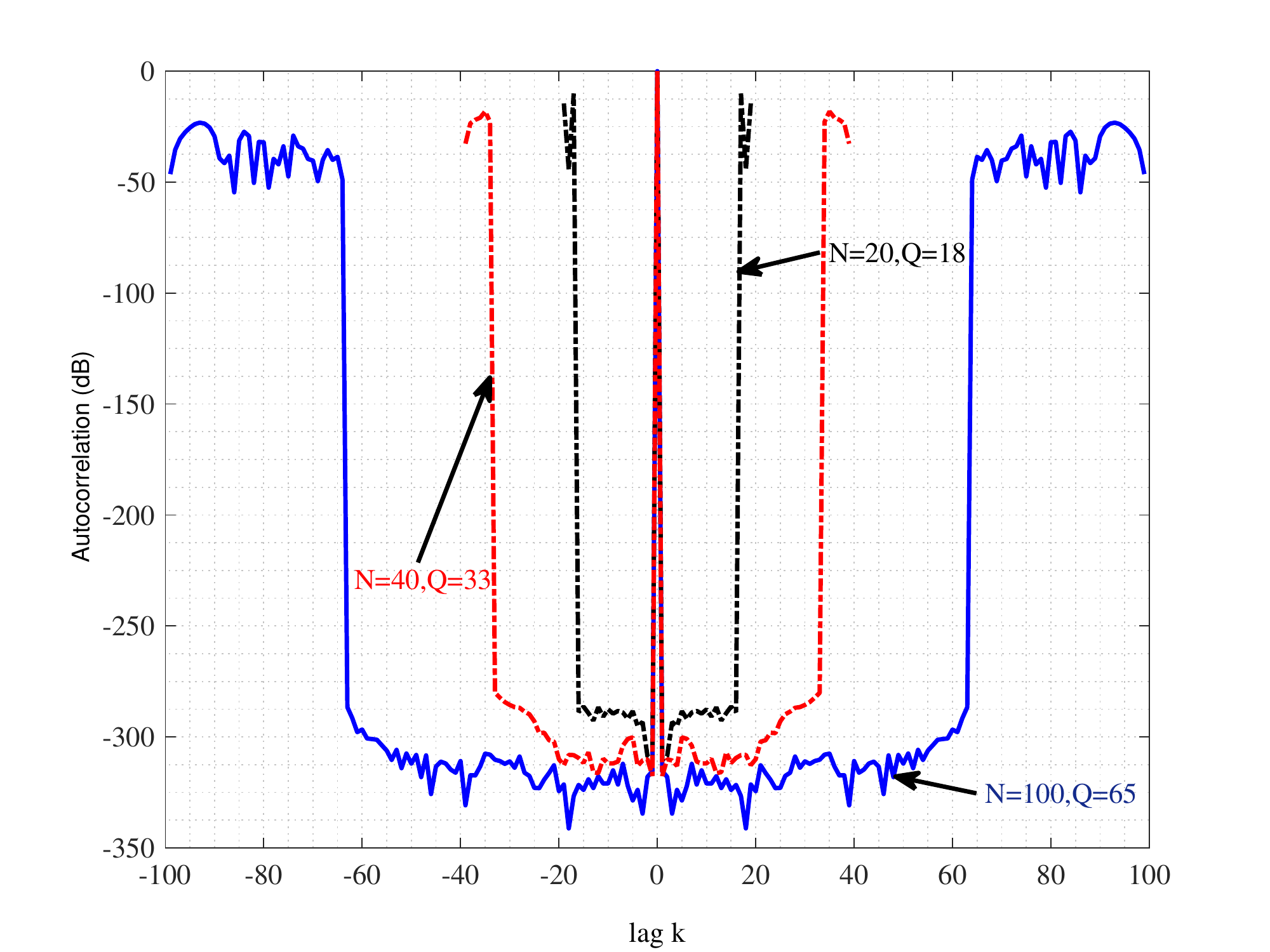}
		\label{fig:10}\mvpc
		\caption{\small Normalized Autocorrelation of POCA for $N=100,Q=65$; $N=40,Q=33$ and $N=20,Q=18$}
		\mvpc\hvpc
	\end{figure}
	
	\begin{enumerate}[resume]
		\item Dealing With Large $N$
	\end{enumerate}
	
	Another drawback of the CAP and WeCAN is that they cannot easily cope with large sequences (i.e. the lengths more than $N\sim 1000$). 
	In these situations, POCA and its fast version, RPOCA, can be applied. It should be noted that due to their iterative nature, the POCA and RPOCA have restrictions as well. 
	In fact, the superiority of these algorithms compared to their predecessors is of order one, two or three decades. 
	That is, POCA is restricted to lengths up to $N\sim 10^{4}$ and
	RPOCA cannot be applied for sequences more than $N\sim 10^{6}.$
	The application of this algorithm on Golomb sequence with lengths $N=10^{3}$ and $N=10^{4}$ are illustrated in \textbf{Fig} \ref{fig11a} and \ref{fig11b}, when suppression of $\; r_{1} ,...,r_{64}$ is required ($Q=65$). 
	The experiments here indicate that for RPOCA S as low as 4 would suffice.
	\begin{figure*}
		\centering
		\mvpc
		\begin{subfigure}{.95\columnwidth}
			\includegraphics[width=1\textwidth,height=.67\linewidth]{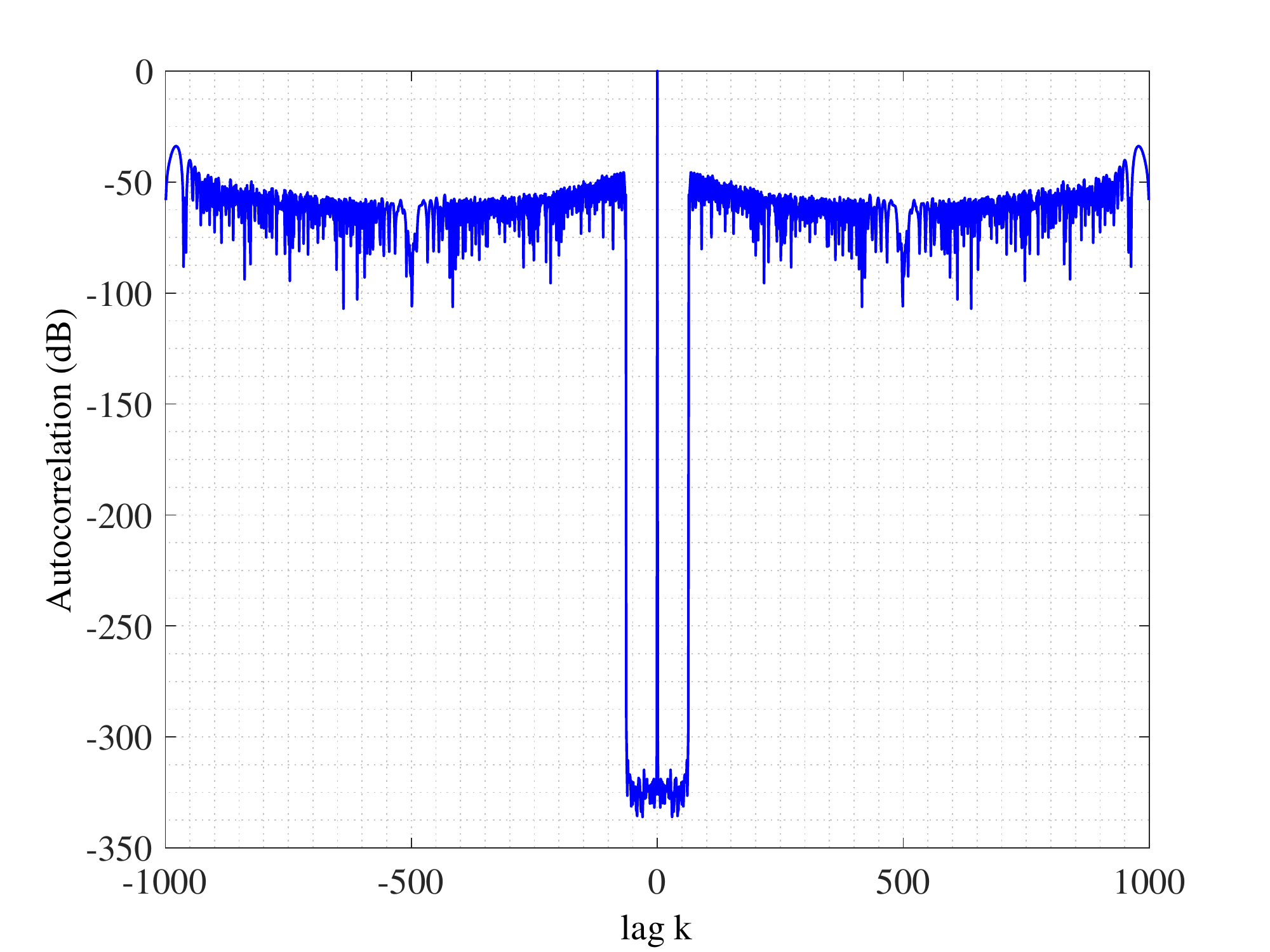}
			\caption{\small}\label{fig11a}
		\end{subfigure}
		\hfill
		\begin{subfigure}{.95\columnwidth}
			\includegraphics[width=1\textwidth,height=.67\linewidth]{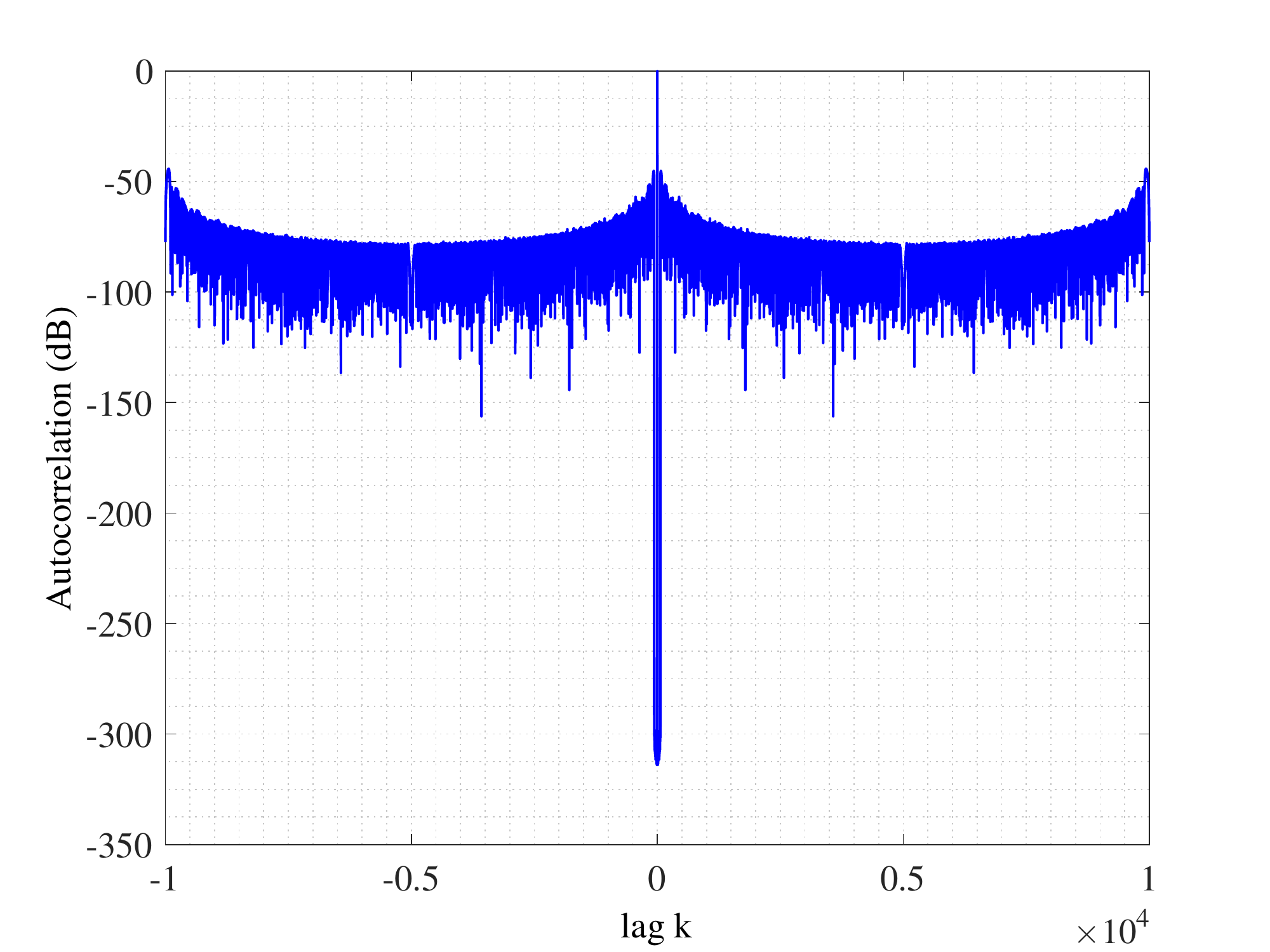}
			\caption{\small}\label{fig11b}
		\end{subfigure}
		\hfill
		\vspace{-.5pc}
		\label{fig11}
		\caption{\small Autocorrelation of POCA for $N=1000,Q=65$ and $N=10000,Q=65$}
		\vspace{-1.2pc}
	\end{figure*}
	
	\begin{enumerate}[resume]
		\item Imposing PAPR restriction and Unimodularity constraint
	\end{enumerate}
	Generally speaking, imposing additional restrictions like the ones introduced in section \ref{sec:Solv} projects time burden on the algorithm for total convergence.
	In another perspective, for a certain amount of iteration, PAPR restriction results in higher side-lobe levels.
	In order to show this fact, iteration numbers for POCA algorithm implemented with unimodularity and PAPR constraints are restricted. 
	In order to assess that the constrained version of the algorithm can suppress the side-lobes down to almost zero, the POCA constrained to unimodularity for $ N = 100, Q = 20 $ is simulated, where, the number of iterations is set to 10000 in order to assure convergence. 
	The result is depicted in \textbf{Fig.} \ref{fig:pocaconstrainedpapr} along with other scenarios. 
	Furthermore, in order to study the effect of PAPR bound, the POCA with constrained PAPR is examined, and to distinguish this case from others, the simulation parameters are $ N = 100, Q = 30 $. 
	Nevertheless, for a certain amount of iterations (1000 iterations)
	the POCA with constrained PAPR is simulated in three cases: unimodularity, $ a = 1.02 $, and $ a = 1.2 $, \textbf{Fig.} \ref{fig:pocaconstrainedpapr}, 
	where it is observed that less suppression (higher side-lobe levels) is reached when restriction on PAPR tightens.
	\vspace{-1.1pc}
	\begin{figure}\hvpc
		\raggedright
		\includegraphics[width=1.1\linewidth,height=.7\linewidth]{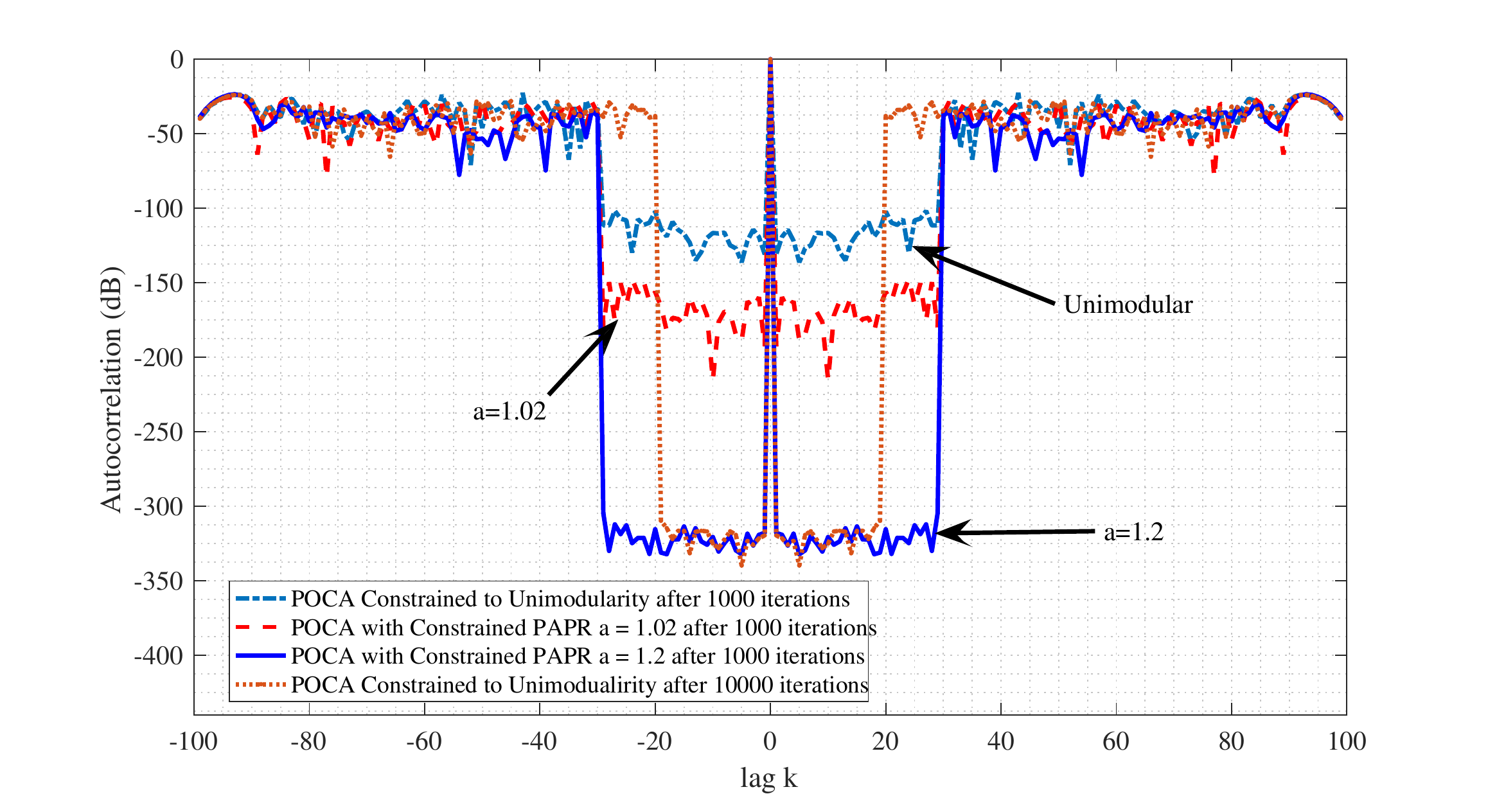}\hvpc
		\caption{Effect of imposing PAPR restriction on the suppression level for certain amount iterations}
		\label{fig:pocaconstrainedpapr}\mvpc\hvpc
	\end{figure}
	
	\subsection{Cross-Correlation Performance}
	\begin{table}
		
		\caption{Comparison of Computational burden when considering CCP}
		
		\begin{tabular}{|c|c|c|c|}
			\hline 
			\begin{tabular}{c|c}
				Consumed Time in seconds & \multirow{2}{*}{Length}\tabularnewline
				\cline{1-1} 
				Algorithm & \tabularnewline
			\end{tabular} & 10 & 100 & 300\tabularnewline
			\hline 
			\hline 
			POCA initialized by modified Bernoulli & 0.1 & 1.3 & 20.3 \tabularnewline
			\hline 
			MIMO-CAN initialized by modified Bernoulli & 0.1 & 3.2 & 83\tabularnewline
			\hline 
			MIMO-CAN random sequence initialization & 0.02 & 2.54 & 67.1 \tabularnewline
			\hline 
		\end{tabular}
		\hvpc\mvpc
	\end{table}    \hvpc
	From the above arguments, it is clear that the autocorrelation related metrics are improved. 
	However, the cross-correlation between the transmitted waveforms should be as low as possible in MIMO radars, 
	hence, to quantify the goodness of the cross-correlation, the cross-correlation peak (CCP) is defined according to,
	\begin{equation} \label{47)}
	CCP=\; {\rm max\; }\left(r_{xy} \left(k\right)\right),
	\end{equation}
	where, $r_{xy} \left(k\right)$ is the cross-correlation between the sequences $\{ x_{n} \} _{n=1}^{N}$ and $\{ y_{n} \} _{n=1}^{N}$,
	\begin{equation} \label{48)}
	r_{xy} (k)=\mathop{\sum }\limits_{n=k+1}^{N} x_{n} y_{n-k}^{*} =r_{xy}^{*} (-k)\, \, ,\, \, \, \, \, k=0,...,N-1.
	\end{equation}
	By this definition, the CCP goes to zero whenever all the cross-correlation values tends to zero.
	The comparison of cross-correlation peaks for Bernoulli system before and after modification is depicted in \textbf{Fig.} \ref{fig12a}. 
	This figure is generated by averaging CCP over 100 instances of the sequences generated by equations \eqref{ZEqnNum605756} and \eqref{ZEqnNum736077} initialized by different initial values. 
	The parameters of the systems in \eqref{ZEqnNum605756} and \eqref{ZEqnNum736077} are $\lambda =1.9$ and $B=1/\lambda$.
	Hence, the modification of the Bernoulli system results in the generation of a set of sequences with low cross-correlation.
	The comparison between the CCP of POCA + Modified Bernoulli is presented in \textbf{Fig.} \ref{fig12b}, when initialized by different initial values and MIMO-CAN algorithm \cite{He2009}, where the number of antenna's in MIMO-CAN is set to $M=40$.
	The MIMO-CAN is the generalization of CAN approach to MIMO radars by considering the cross-correlation in addition to autocorrelation.
	In \textbf{Fig.} \ref{fig12b}, the superiority of the POCA + Modified Bernoulli compared to MIMO-CAN is at least 12 dB. 
	Furthermore, by comparing \ref{fig12a} and \ref{fig12b} it is inferred that, when using POCA algorithm in order to suppress the autocorrelation of the modified Bernoulli, the cross-correlation degenerates (i.e. increases). 
	This fact indicates the existence of a trade-off of obtaining good autocorrelation and reduced cross-correlation.
	\begin{figure*}[!h]
		\centering\mvpc\mvpc
		\begin{subfigure}{\columnwidth}
			\includegraphics[width=0.95\textwidth,height=.7\linewidth]{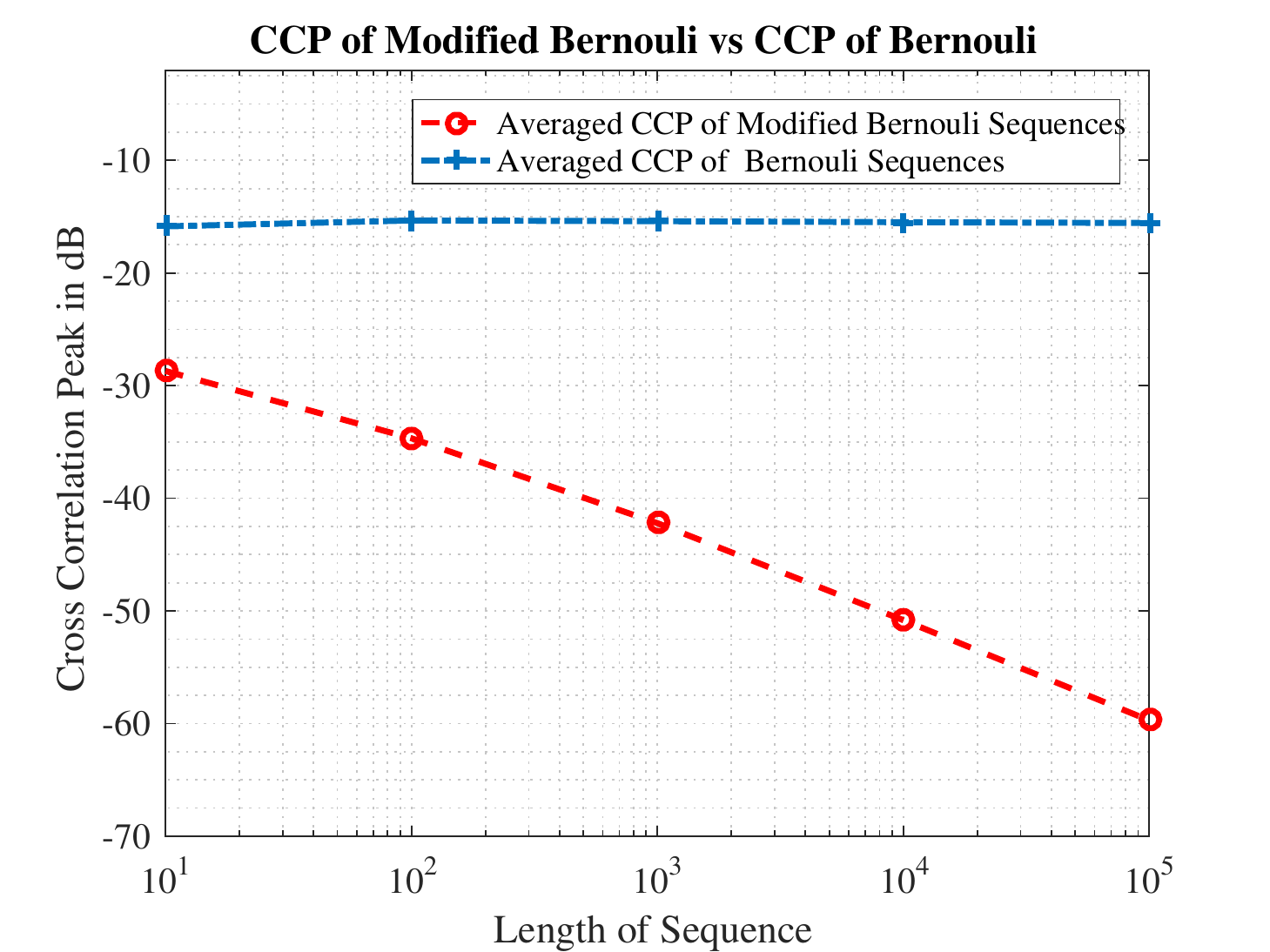}\hvpc
			\caption{\small}\label{fig12a}
		\end{subfigure}
		\hfill
		\begin{subfigure}{\columnwidth}
			\includegraphics[width=0.99\textwidth,height=.7\linewidth]{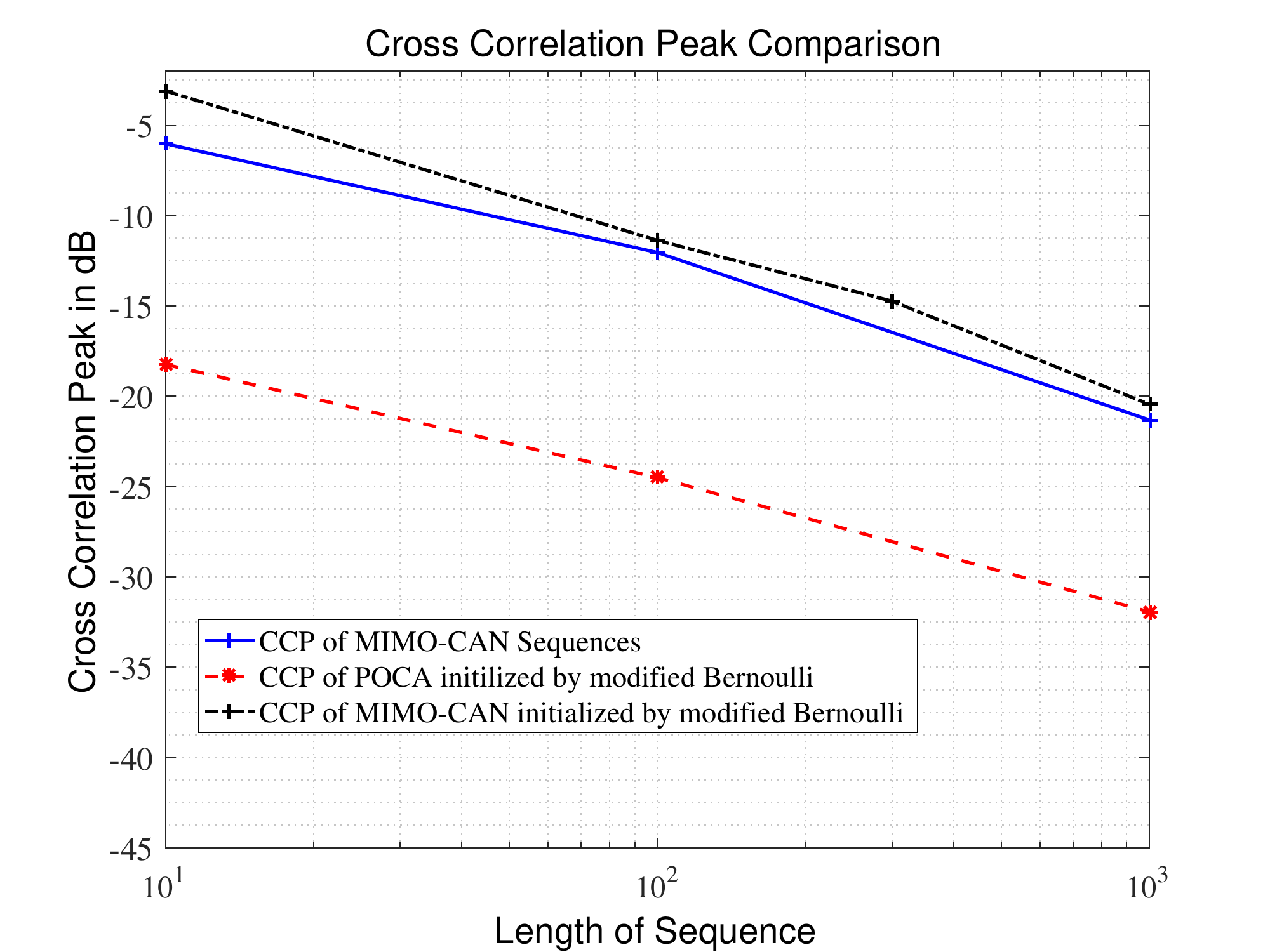}\hvpc
			\caption{\small}\label{fig12b}
		\end{subfigure}\hvpc
		\caption{\small Comparison of cross-correlation peak (CCP) between (a) Bernoulli and modified Bernoulli (b) MIMO-CAN with random sequences initialization vs MIMO-CAN + Modified Bernoulli  initialized by different initial values vs POCA + Modified Bernoulli initialized by different initial values.}
		\vspace{-1.2pc}\hvpc
	\end{figure*}
	\mvpc
	
	\subsection{Time performance}
	One of the advantages of the POCA algorithm over the monotonic minimizer is its low time consumption. 
	Therefore, the MWISL \cite{Song2015}, which is able to suppress the desired part of the autocorrelation is chosen as a benchmark and  
	the scenario in \eqref{ZEqnNum328910} is contemplated to compare consumed time and the MPCL.
	These two parameters are depicted in \textbf{Fig.} \ref{fig:timecompare} in an x-y plot, 
	where it is observed that for an equal amount of consumed time, the suppression level for POCA, measured by MPCL, is considerably more.
	
	\begin{figure}
		\centering
		\includegraphics[width=1\linewidth,height=.7\linewidth]{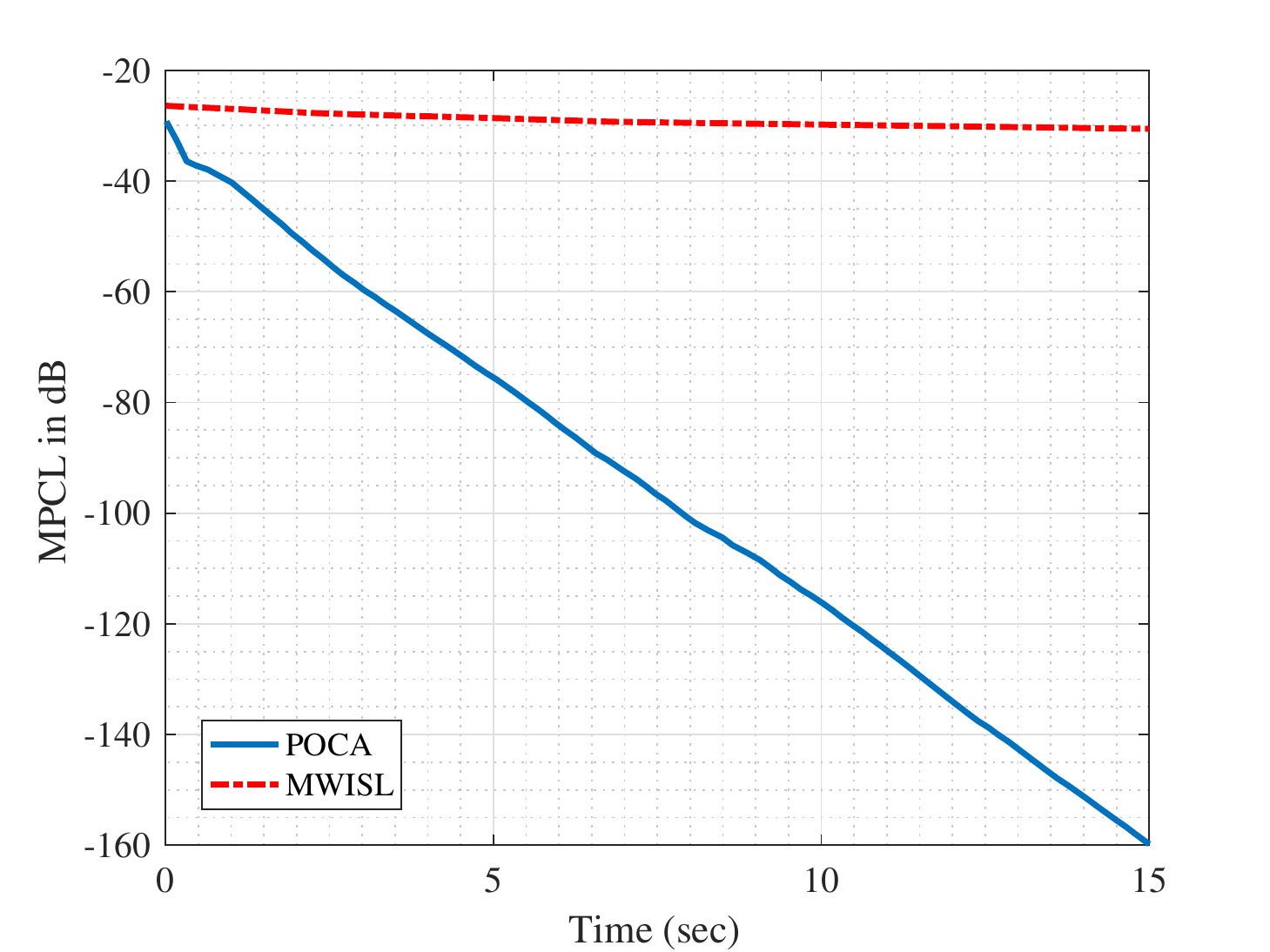}
		\caption{Comparison of the consumed time between POCA and MWISL for the scenario given in \eqref{ZEqnNum328910} .}\mvpc
		\label{fig:timecompare}
	\end{figure}
	
	\section{Conclusion}
	
	The  waveform generation for MIMO radars has been examined. 
	Our idea was that by juxtaposition of the chaotic systems and cyclic algorithms, it is possible to design large set of sequences which,
	
	\begin{itemize}
		\item  have very low autocorrelation side-lobes.
		
		\item  can have ``almost zero" side-lobes in specified regions of autocorrelation.
		
		\item  maintain low cross-correlation between the elements of the sets.
	\end{itemize}
	In the way of obtaining the aforementioned set,  the problem of minimizing peak side-lobe level is formulated. 
	Consequently, several algorithms are developed namely, PMQA, PMAR, POCA, and RPOCA. 
	The self-applied algorithms to mono-static pulse Doppler (PD) radars outperform their predecessors both in the terms of implementation time 
	and their generated side-lobe level of the sequences. 
	Moreover, the Bernoulli chaotic system is modified to have low cross-correlations. 
	By juxtaposition of the previously mentioned algorithms and modified Bernoulli, appropriate waveforms for MIMO radar applications can be generated.\mvpc
	
	\section{Appendix A} \label{sec:appA}

	Throughout the paper, the notion that $( {\mathbb C}^{N\times M} ,\tbd.\tbd _{\infty })$ is a Banach space has been used extensively. In order to maintain the entirety of the article, the credibility of this fact should be verified. Consider the vector space of all $N\times M$ matrices over the field of complex numbers, or ${\mathbb C}^{N\times M} .$ Let $\bA$ and $\bB$ be two arbitrary elements of this vector space. Moreover, let $\alpha $ be an arbitrary complex number. In order to maintain that $({\mathbb C}^{N\times M} ,\tbd.\tbd_{\infty })$ is a normed space, it is sufficient to prove that,
	
	\begin{itemize}
		\item  Positive definiteness:
		\begin{equation} \label{ZEqnNum798489} \begin{array}{l} {\tbd \bA \tbd_{\infty } >0\, \, {\rm for\; all\; nonzero\; vectors\; and}\, } \\ {\tbd \bA \tbd_{\infty } =\b0\, \, {\rm if\; and\; only\; if}\, \bA=\b0} \end{array} \end{equation}
		
		\item  Linearity:
		\begin{equation} \label{ZEqnNum995007} \tbd\alpha  \bA \tbd_{\infty } =\alpha \tbd \bA \tbd_{\infty }  \end{equation}
		
		\item  Triangle inequity:
		\begin{equation} \label{51)} \tbd \bA+\bB \tbd_{\infty } \, \le \, \, \, \tbd \bA \tbd_{\infty } +\tbd \bB \tbd_{\infty } .
		\end{equation}
	\end{itemize}
	
	The two first properties, \eqref{ZEqnNum798489} and \eqref{ZEqnNum995007} are straightforward. To prove that last note that
	\begin{equation} \label{ZEqnNum481496}
	\begin{array}{l}
	{\tbd\bA+\bB\tbd_{\infty } \, =\max \left|A_{i,j} +B_{i,j} \right|\, } \\
	{\le \max (\left|A_{i,j} \right|\, \, +\left|B_{i,j} \right|\, )} 
	{=\max \, \left|A_{i,j} \right|\, +\max \left|B_{i,j} \right|} \\
	{=\, \tbd\bA\tbd_{\infty } + \tbd\bB\tbd_{\infty } ,}
	\end{array}
	\end{equation}
	where, triangle inequity for complex numbers is used in forming \eqref{ZEqnNum481496}. For a normed space to be Banach space an additional condition known as ``completeness'' is essential. In specific terms, given any Cauchy series in the space, it should converge.
	The completeness of $({\mathbb C}^{N\times M} ,\tbd.\tbd_{\infty })$ is revealed here through Theorem \ref{th2}. Before that, the term ``Cauchy series'' is defined,
	
	\textbf{Definition:}
	\textit{Consider a normed vector space }$( \bX,\tbd.\tbd)$\textit{, a series }$\left\{x_{n} \right\}_{n=1}^{\infty } $\textit{ with its elements in }$\bX$ \textit{is a Cauchy series if,}
	\begin{equation} \label{54)}
	\, \forall \varepsilon >0\, \, \, \, \, \exists N_{0} \, \, ;\, \, \, \, \forall m,n>N_{0} \, \, \, \, \, \left|x_{n} -x_{m} \right|<\varepsilon .
	\end{equation}
	
	\begin{theorem}\label{th2}
		
		The metric space defined by the $\tbd.\tbd_{\infty }$ norm
		over ${\mathbb C}^{N\times M} $ is complete.
	\end{theorem}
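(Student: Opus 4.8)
The plan is to reduce completeness of the matrix space to the completeness of the scalar field $\mathbb{C}$ by working entrywise, exploiting that $\mathbb{C}^{N\times M}$ has only finitely many coordinates.

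First I would take an arbitrary Cauchy sequence $\{\bA^{(k)}\}_{k=1}^{\infty}$ in $(\mathbb{C}^{N\times M},\tbd.\tbd_{\infty})$ and fix a position $(i,j)$. Since
$|A^{(k)}_{i,j}-A^{(l)}_{i,j}|\le \max_{p,q}|A^{(k)}_{p,q}-A^{(l)}_{p,q}| = \tbd\bA^{(k)}-\bA^{(l)}\tbd_{\infty}$,
the scalar sequence $\{A^{(k)}_{i,j}\}_{k}$ is itself Cauchy in $\mathbb{C}$. Because $\mathbb{C}$ (equivalently $\mathbb{R}^{2}$ with the Euclidean metric) is complete, this sequence converges; I would call its limit $A_{i,j}$. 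Repeating this for each of the finitely many pairs $(i,j)$ produces a candidate limit matrix $\bA:=[A_{i,j}]\in\mathbb{C}^{N\times M}$.

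Next I would show $\bA^{(k)}\to\bA$ in the $\tbd.\tbd_{\infty}$ norm. Given $\varepsilon>0$, the Cauchy hypothesis supplies an $N_{0}$ such that $\tbd\bA^{(k)}-\bA^{(l)}\tbd_{\infty}<\varepsilon$ whenever $k,l>N_{0}$; in particular $|A^{(k)}_{i,j}-A^{(l)}_{i,j}|<\varepsilon$ for every $(i,j)$. Holding $k$ fixed and letting $l\to\infty$, continuity of the modulus yields $|A^{(k)}_{i,j}-A_{i,j}|\le\varepsilon$ for every $(i,j)$, hence $\tbd\bA^{(k)}-\bA\tbd_{\infty}=\max_{i,j}|A^{(k)}_{i,j}-A_{i,j}|\le\varepsilon$ for all $k>N_{0}$. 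Thus the Cauchy sequence converges inside the space, which is therefore complete.

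There is no genuine obstacle here; the only point needing a moment's care is the passage to the entrywise limit, where one should note that finiteness of the index set $\{1,\dots,N\}\times\{1,\dots,M\}$ is precisely what allows a single threshold $N_{0}$ to serve all coordinates simultaneously, and that the completeness of $\mathbb{C}$ used in the first step is itself inherited from that of $\mathbb{R}$.
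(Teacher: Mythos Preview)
Your proof is correct and follows essentially the same approach as the paper: both arguments reduce to entrywise Cauchy sequences in $\mathbb{C}$, invoke completeness of $(\mathbb{C},|\cdot|)$ to obtain a candidate limit matrix, and then verify convergence in $\tbd.\tbd_{\infty}$ by taking the maximum over the finitely many entries. Your version is in fact slightly more explicit about the limit passage (letting $l\to\infty$ in the Cauchy inequality) and about why finiteness of the index set matters, but the strategy is identical.
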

	\begin{proof}
		
		Let, $\{ A^{n} \} _{n=0}^{\infty } $ be a Cauchy series. Note that in this notation $n$ is just a superscript and does not denote the power. For any $\varepsilon >0$, there exists an integer positive $N_{0} $ such that for every $n,m>N_{0} $,
		\begin{equation} \label{ZEqnNum652987} \tbd\bA^{m} -\bA^{n} \tbd_{\infty } \, \le \, \, \varepsilon . \end{equation}
		In order to show that such a sequence converges, it is sufficient to prove that there exists a matrix like $\bA$ such that for any $\varepsilon >0$, there is an integer $N_{0} >0$ such that
		\begin{equation} \label{56)}
		\tbd\bA^{n} -\bA\tbd_{\infty } \, \le \, \, \varepsilon .
		\end{equation}
		From \eqref{ZEqnNum652987}, it is revealed that
		\begin{equation} \label{57)}
		\max \left|A_{i,j}^{n} -A_{i,j}^{m} \right|<\varepsilon
		\end{equation}
		Therefore, for any $\varepsilon >0$, there exists a positive integer, $N$ such that for any $n,m>N$ and for every $i$ and $j$ the inequality $ \left|A_{i,j}^{n} -A_{i,j}^{m} \right|<\varepsilon  $ is guaranteed. Alternatively, for every $i$ and $j$ the complex sequence generated by $\left\{A_{_{i,j} }^{n} \right\}_{n=0}^{\infty } $ is a Cauchy series in $({\mathbb C},|.|)$ normed vector space, where $|.|$ denotes the absolute value. Note that, $(\; {\mathbb C},|.|)$ is known to be a complete normed vector space, indicating that, every Cauchy series in it converges. For the sequences  $\left\{A_{i,j}^{n} \right\}_{n=0}^{\infty } $ assume, they converge to $f_{i,j}^{} $ for every $i$ and \textit{j.} Mathematically speaking, 
		\begin{equation} \label{ZEqnNum528600} \, \forall \varepsilon >0\, \, \, \, \, \, \, \, \exists N_{0} \, \, ;\, \, \, \, \forall n>N_{0} \, \, \, \, \forall _{i,j} \, \, \, \left|A_{i,j}^{n} -f_{i,j}^{} \right|<\varepsilon .    \end{equation}
		Note that, the inequality holds for every $i$ and $j$. Therefore, it also holds when ``every" is changed to ``max". That is, the statement in \eqref{ZEqnNum528600} is equivalent to,
		\begin{equation} \label{60)} \, \forall \varepsilon >0\, \, \, \, \, \, \, \, \exists N_{0} \, \, ;\, \, \, \, \forall n>N_{0} \, \, \, \, \max \, \left|A_{i,j}^{n} -f_{i,j}^{} \right|<\varepsilon .   \end{equation}
		Hence,
		\begin{equation} \label{ZEqnNum176445} \, \forall \varepsilon >0\, \, \, \, \, \, \, \, \exists N_{0} \, \, ;\, \, \, \, \forall n>N_{0} \, \, \, \, \tbd \bA^{n} -\bF\tbd_{\infty } \, \le \, \, \varepsilon . \end{equation}
		where, $\bF$ is defined according to the following equation.
		\begin{equation}\label{key}
		{\bf{F}} = \left[ {{f_{i,j}}} \right],\,\,\,i = 1,...,N,\,\,\,j = 1,...,M.
		\end{equation}
		The statement in \eqref{ZEqnNum176445} states that the sequence $\{ \bA^{n} \} _{n=0}^{\infty } $ converges to $\bF$ in the Chebyshev distance sense, resulting in the completeness of the proof.
	\end{proof}

	\mvpc\hvpc
	\section{Appendix B}\label{sec:appB}
	
	In this appendix alternative approaches to exact solution of the smallest circle problem in \eqref{ZEqnNum175528} is developed. For any complex sequence $\left\{\mu _{k} \right\}_{k=1}^{Q-1} $ consider the following problem,
	\begin{equation}\label{ZEqnNum491174}
	\begin{gathered}
	\mathop{\min }\limits_{x} \mathop{\max }\limits_{k} \left|x-\mu _{k} \right| \\
	{=\mathop{\min }\limits_{x_{R} ,\, x_{I} } \left\{\mathop{\max }\limits_{k} \sqrt{\left(x_{R} -Re\left\{\mu _{k} \right\}\right)^{2} +\left(x_{I} -Im\left\{\mu _{k} \right\}\right)^{2} } \right\}.}
	\end{gathered}
	\end{equation}
	First, notice that this unconstrained problem is equivalent to the following constrained problem,
	\begin{equation}\label{ZEqnNum177560}
	\begin{gathered}
	\mathop{\min }\limits_{x_{R} ,\, x_{I} } R \\
	s.\, t.\, \, \sqrt{\left(x_{R} -Re\left\{\mu _{k} \right\}\right)^{2} +\left(x_{I} -Im\left\{\mu _{k} \right\}\right)^{2} } \le R,
	\end{gathered}
	\end{equation}
	where, $x_{R} $ and $x_{I} $ are real and imaginary parts of $x$.
	Then, by defining $y_{k} :=x_{I} -Im\left\{\mu _{k} \right\}$, $w_{k} :=x_{R} -Re\left\{\mu _{k} \right\}$, $k=1,\ldots , Q-1$ this problem can be restated as,
	\begin{equation}\label{ZEqnNum733358}
	\begin{gathered}
	{\mathop{\min }\limits_{x_{R} ,\, x_{I} ,\left\{w_{k} \right\},\left\{y_{k} \right\}} R} \\
	{s.\, t.\, \, \, \, w_{k} +x_{R} =Re\left\{\mu _{k} \right\},\, \, k=1,\; \ldots ,\; Q-1} \\
	{\, \, \, \, \, \, \, \, \, \, y_{k} \, \, +x_{I} =Im\left\{\mu _{k} \right\}\; ,\, \, k=1,\; \ldots ,\; Q-1} \\
	{\, \, \, \, \, \, \, \, \, R\ge \sqrt{w_{k}^{2} +y_{k}^{2} } \; \, \, \, \, \, \, \, \, \, \, ,\, \, k=1,\; \ldots ,\; Q-1.}
	\end{gathered}
	\end{equation}
	The problem in \eqref{ZEqnNum733358} is a cone optimization problem and can be solved through interior point method with an arithmetic cost of $O(n^{3.5} \left|{\rm \; log\; }\delta \right|)$, where $\delta $ is the user defined accuracy parameter.
	
	Another approach to solving \eqref{ZEqnNum491174} is obtained through noticing that this problem is also an unconstrained nondifferentiable convex programming, solvable through  subgradient method. Accordingly, one may follow the following steps in order to find the exact solution.
	
	\begin{enumerate}
		\item  Start form an appropriate point, for instance
		\begin{equation}\label{e76}
		x^{0} =\; \frac{\max ^{d} \left\{\mu _{k} \right\}+\min ^{d} \left\{\mu _{k} \right\}}{2} .
		\end{equation}

		\item  Compute the subgradient of
		\begin{equation}\label{e77}
		\begin{gathered}
		f(x_{R} ,x_{I} )= \\
		\mathop{\max }\limits_{k} \sqrt{\left(x_{R} -Re\left\{\mu _{k} \right\}\right)^{2} +\left(x_{I} -Im\left\{\mu _{k} \right\}\right)^{2} },
		\end{gathered}
		\end{equation}

		at $(x_{R}^{i} ,x_{I}^{i} )$
		
		\item  Determine the step size by using line search. To this end, let
		\begin{equation}\label{66)}
		\begin{gathered}
		K=\{ k:f(x_{R} ,x_{I} )=  \\
		\sqrt{(x_{R} -Re\{ \mu _{k} \} )^{2} +(x_{I} -Im\{ \mu _{k} \} )^{2} } \} ,
		\end{gathered}
		\end{equation}
		be the active functions index set. Then, constitute the set,
		\begin{equation} \label{67)}
		\begin{array}{l} {\partial f(x_{R} ,x_{I} )=} \\ {co\left\{\frac{\left[\begin{array}{c} {x_{R} -Re\left\{\mu _{k} \right\}} \\ {x_{I} -Im\left\{\mu _{k} \right\}} \end{array}\right]}{\sqrt{\left(x_{R} -Re\left\{\mu _{k} \right\}\right)^{2} +\left(x_{I} -Im\left\{\mu _{k} \right\}\right)^{2} } } |k\in K\right\},} \end{array}
		\end{equation}
		where "co" and $\partial f(x_{R} ,x_{I} )$ denote the convex hull and the sub-differential of $f(x_{R} ,x_{I} )$, respectively. Therefore, any member of this set is a sub-gradient of $f$. Accordingly, if $\partial f(x_{R} ,x_{I} )$ has just one member, it means $f$ is differentiable and the steepest decent direction (i.e. the negative of the sole member of $\partial f(x_{R} ,x_{I})$) should be taken as search direction vector. Otherwise, the negative sub-gradient with the smallest norm should be taken.

		\item  Set $i=i+1$ and go to step 2 unless the step size is less than a specified threshold.
	\end{enumerate}
	
	It should be noted that there may be other solutions to the smallest enclosing circle problem, which are beyond the topic of this paper. Moreover, like the solutions given in this appendix all the solutions for smallest circle problem involve iteration, and thus utilizing them yields to approximate solution after some iterations. This fact is the reason behind finding a heuristic approximate solution in \eqref{ZEqnNum457011}. 
	Note that, if the problem in \eqref{ZEqnNum175528} is confined to real sequences, the solution in \eqref{ZEqnNum457011} is exact solution.
	This fact is revealed in the following Lemma.
	\hvpc
	\begin{lemma}
		The smallest circle problem for real-valued sequence has a solution of the form,
		\begin{equation} \label{68)} x_{*} =\; \frac{\max \mu _{k} +\min \mu _{k} }{2}  \end{equation}
	\end{lemma}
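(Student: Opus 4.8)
The plan is to reduce the (a priori complex, two-dimensional) smallest-circle problem to a one-dimensional piecewise-linear minimization and then solve that explicitly. First I would argue that when every $\mu_k$ is real the optimal $x$ may be taken real: writing $x=x_R+jx_I$, we have $|x-\mu_k|^2=(x_R-\mu_k)^2+x_I^2\ge (x_R-\mu_k)^2$, so replacing $x$ by its real part $x_R$ does not increase any $|x-\mu_k|$, hence does not increase $\max_k|x-\mu_k|$. Therefore it suffices to minimize $g(x):=\max_k|x-\mu_k|$ over $x\in\mathbb{R}$.

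Next I would simplify $g$. Put $\mu_{\min}:=\min_k\mu_k$ and $\mu_{\max}:=\max_k\mu_k$. For a finite set the point farthest from any given $x$ is one of the two extreme points, so
\[
g(x)=\max_k|x-\mu_k|=\max\bigl(x-\mu_{\min},\;\mu_{\max}-x\bigr),
\]
which is verified by checking the three cases $x\le\mu_{\min}$, $\mu_{\min}\le x\le\mu_{\max}$, $x\ge\mu_{\max}$ (in the last two only nonnegative terms occur; in the first $\mu_{\max}-x$ dominates).

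Finally I would minimize this maximum of a strictly increasing affine function and a strictly decreasing affine function. For every real $x$,
\[
g(x)\;\ge\;\tfrac12\bigl[(x-\mu_{\min})+(\mu_{\max}-x)\bigr]\;=\;\frac{\mu_{\max}-\mu_{\min}}{2},
\]
because the maximum of two reals is at least their arithmetic mean. Choosing $x_*=\dfrac{\mu_{\max}+\mu_{\min}}{2}$ makes the two affine terms equal, giving $g(x_*)=\dfrac{\mu_{\max}-\mu_{\min}}{2}$, which attains the lower bound. Hence $x_*$ is a global minimizer, i.e. the claimed $x_*=\dfrac{\max\mu_k+\min\mu_k}{2}$; strict monotonicity of the two affine pieces moreover shows it is the unique minimizer (over $\mathbb{R}$), and over $\mathbb{C}$ the minimizers are exactly $x_*+j\mathbb{R}$ intersected with the minimum, which collapses to $x_*$.

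There is essentially no hard step here. The only point that needs a word of care is the first reduction—that optimizing over complex $x$ gains nothing when the target set is real—after which the argument is just the elementary observation that a $\max$ of two affine functions of opposite slope is minimized where they cross.
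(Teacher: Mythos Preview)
Your proof is correct. The route differs from the paper's: the paper splits the real line at $x_*=\tfrac{\max\mu_k+\min\mu_k}{2}$ and, on each half-line, reduces $\max_k|x-\mu_k|$ to a single linear expression ($x-\min\mu_k$ on the right, $\max\mu_k-x$ on the left), then minimizes each linear problem subject to the half-line constraint to land back at $x_*$. You instead write $g(x)=\max(x-\mu_{\min},\,\mu_{\max}-x)$ globally and invoke the one-line inequality $\max(a,b)\ge\tfrac{a+b}{2}$ to get a uniform lower bound that is attained at $x_*$. Your argument is a little slicker---no case split---and it also supplies the optimal value $\tfrac{\mu_{\max}-\mu_{\min}}{2}$ directly. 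You additionally include the reduction from complex $x$ to real $x$ (via $|x-\mu_k|^2=(x_R-\mu_k)^2+x_I^2$), which the paper omits; given that the ambient problem allows complex $x$, that reduction is worth stating. Your final sentence about the complex minimizers is phrased awkwardly, but the content is fine: any nonzero imaginary part strictly increases every $|x-\mu_k|$, so the unique complex minimizer is $x_*$ itself.
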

	
	\begin{proof}
		First, notice that the feasible set is not empty. Divide the feasible set into two following subsets:
		
		\begin{enumerate}
			\item  The elements of feasible set that are greater than $x_{*} $. That is,
			
			\begin{equation} \label{ZEqnNum106679} x\ge \frac{\max \mu _{k} +\min \mu _{k} }{2}  \end{equation}
			This inequality implies that,
			\begin{equation} \label{ZEqnNum889251}
			x-\; \min \mu _{k} \ge \left|x-\mu _{k} \right|,\; \; \forall k\;
			\end{equation}
			where the equality holds for those $k$ where minimum of $\mu _{k} $ occurs. From \eqref{ZEqnNum889251} it is easy to deduce that,
			
			\begin{equation} \label{71)} \mathop{\max }\limits_{k} \left|x-\mu _{k} \right|=x-\; \min \mu _{k}  \end{equation}
			So the problem in \eqref{ZEqnNum175528}  is equal to the minimization problem:
			\begin{equation} \label{ZEqnNum807248} \begin{array}{c} {\mathop{\min }\limits_{x} \left(x-\; \min \mu _{k} \right)} \\ {s.t.\; \, \, \, x\ge \frac{\max \mu _{k} +\min \mu _{k} }{2} \; } \end{array} \end{equation}
			Obviously, the solution to this problem is:
			\begin{equation} \label{73)} x_{*} =\; \frac{\max \mu _{k} +\min \mu _{k} }{2} .            \end{equation}
			\item  The elements of feasible set that are less than $x_{*} $, that is,
			\begin{equation} \label{74)} x\le \frac{\max \mu _{k} +\min \mu _{k} }{2} .        \end{equation}
			Similarly, here we have:
			\begin{equation} \label{75)} {\rm max\; }\mu _{k} -\; x\ge \left|x-\mu _{k} \right|,\; \; \forall k\;  \end{equation}
			or,
			\begin{equation} \label{76)} \mathop{\max }\limits_{k} \left|x-\mu _{k} \right|=max\; \mu _{k} -\; x.      \end{equation}
			Thus, in this case the minimization problem in \eqref{ZEqnNum175528} is equal to the following problem:
			\begin{equation} \label{77)} \begin{array}{c} {\mathop{\min }\limits_{x} (max\; \mu _{k} -\; x)} \\ {s.t.\, \; x\le \frac{\max \mu _{k} +\min \mu _{k} }{2} \; } \end{array}.
			\end{equation}
			With the change of variable $z=-x$, this problem can be restated as:
			\begin{equation} \label{78)} \begin{array}{c} {\mathop{\min }\limits_{z} (z+\; max\; \mu _{k} )} \\ {s.t.\; \, \, z\ge -\frac{\max \mu _{k} +\min \mu _{k} }{2} \; } \end{array},
			\end{equation}
			which is the same problem as in  \eqref{ZEqnNum807248} with,
			\begin{equation}\label{79)}
			z_{*} =-\; \frac{\max \mu _{k} +\min \mu _{k} }{2}
			\end{equation}
		\end{enumerate}
		
		Finally, note that the inclusion of the $x_{*} $ in both subsets does not alter the integrity of the proof.
	\end{proof}
	\hvpc
	\bibliographystyle{plain}
	\bibliography{IEEEabrv,2PSL}

\begin{thebibliography}{10}

\bibitem{Baden2011}
John~Michael Baden.
\newblock {Efficient optimization of the merit factor of long binary
  sequences}.
\newblock {\em Information Theory, IEEE Transactions on}, 57(12):8084--8094,
  2011.

\bibitem{Barker1953}
R~H Barker.
\newblock {Group synchronizing of binary digital systems}.
\newblock {\em Communication theory}, pages 273--287, 1953.

\bibitem{Boehmer1967}
Ann~M Boehmer.
\newblock {Binary pulse compression codes}.
\newblock {\em Information Theory, IEEE Transactions on}, 13(2):156--167, 1967.

\bibitem{Borwein2008}
PETER Borwein, Ron Ferguson, and Joshua Knauer.
\newblock {The merit factor problem}.
\newblock {\em London Mathematical Society Lecture Note Series}, 352:52, 2008.

\bibitem{Brown1998}
Ray Brown and Leon~O Chua.
\newblock {Clarifying chaos II: Bernoulli chaos, zero Lyapunov exponents and
  strange attractors}.
\newblock {\em International Journal of Bifurcation and Chaos}, 8(01):1--32,
  1998.

\bibitem{Chu1972}
David Chu.
\newblock {Polyphase codes with good periodic correlation properties
  (corresp.)}.
\newblock {\em IEEE Transactions on information theory}, pages 531--532, 1972.

\bibitem{Collet2008}
P~Collet.
\newblock {Dynamical Systems and Stochastic Processes}.
\newblock 2008.

\bibitem{Cui2014}
Guolong Cui, Hongbin Li, and Muralidhar Rangaswamy.
\newblock {MIMO radar waveform design with constant modulus and similarity
  constraints}.
\newblock {\em Signal Processing, IEEE Transactions on}, 62(2):343--353, 2014.

\bibitem{Dai2008}
Xiaoming Dai, Yong Bai, and Lan Chen.
\newblock {Shedding New Light on Merit Factor}.
\newblock In {\em VTC Spring 2008 - IEEE Vehicular Technology Conference},
  pages 1776--1780. IEEE, may 2008.

\bibitem{Frank1962}
R~Frank, S~Zadoff, and R~Heimiller.
\newblock {Phase shift pulse codes with good periodic correlation properties
  (corresp.)}.
\newblock {\em Information Theory, IRE Transactions on}, 8(6):381--382, 1962.

\bibitem{FrankZadoffHeimiller1962}
R~Frank, S~Zadoff, and R~Heimiller.
\newblock Phase shift pulse codes with good periodic correlation properties
  (corresp.).
\newblock {\em IRE Transactions on Information Theory}, 8(6):381--382, 1962.

\bibitem{Golomb1965}
Solomon~Wolf Golomb and Robert~A Scholtz.
\newblock {Generalized barker sequences}.
\newblock {\em Information Theory, IEEE Transactions on}, 11(4):533--537, 1965.

\bibitem{H.E.Najafabadi2015}
{H. E. Najafabadi}.
\newblock {\em {MIMO radar waveform design using chaos theory}}.
\newblock PhD thesis, 2015.

\bibitem{Haimovich2008}
Alexander Haimovich, Rick Blum, and Leonard Cimini.
\newblock {MIMO Radar with Widely Separated Antennas}.
\newblock {\em IEEE Signal Processing Magazine}, 25(1):116--129, 2008.

\bibitem{Hao2009}
He~Hao, P~Stoica, and Li~Jian.
\newblock {Designing Unimodular Sequence Sets With Good Correlations;Including
  an Application to MIMO Radar}.
\newblock {\em IEEE Transactions on Signal Processing}, 57(11):4391--4405,
  2009.

\bibitem{He2005}
D~He and H~Leung.
\newblock {Quasi-orthogonal chaotic CDMA multi-user detection using optimal
  chaos synchronization}.
\newblock {\em Circuits and Systems II: Express Briefs, IEEE Transactions on},
  52(11):739--743, 2005.

\bibitem{He2009}
Hao He, Petre Stoica, and Jian Li.
\newblock {Designing unimodular sequence sets with good
  correlations—Including an application to MIMO radar}.
\newblock {\em Signal Processing, IEEE Transactions on}, 57(11):4391--4405,
  2009.

\bibitem{Jin2013}
Yang Jin, Hongqiang Wang, Weidong Jiang, and Zhaowen Zhuang.
\newblock {Complementary-based chaotic phase-coded waveforms design for MIMO
  radar}.
\newblock {\em IET Radar, Sonar {\&} Navigation}, 7(4):371--382, 2013.

\bibitem{Li2008}
Jian Li, Petre Stoica, and Xiayu Zheng.
\newblock {Signal synthesis and receiver design for MIMO radar imaging}.
\newblock {\em Signal Processing, IEEE Transactions on}, 56(8):3959--3968,
  2008.

\bibitem{Liu2007}
Zhong Liu, Xiaohua Zhu, Wen Hu, and Fei Jiang.
\newblock {Principles of chaotic signal radar}.
\newblock {\em International Journal of Bifurcation and Chaos},
  17(05):1735--1739, 2007.

\bibitem{Mercer2013}
Idris Mercer.
\newblock {Merit factor of Chu sequences and best merit factor of polyphase
  sequences}.
\newblock {\em Information Theory, IEEE Transactions on}, 59(9):6083--6086,
  2013.

\bibitem{Ott2002}
Edward Ott.
\newblock {\em {Chaos in dynamical systems}}.
\newblock Cambridge university press, 2002.

\bibitem{Pollicott1998}
Mark Pollicott and Michiko Yuri.
\newblock {\em Dynamical systems and ergodic theory}, volume~40.
\newblock Cambridge University Press, 1998.

\bibitem{Song2015}
Junxiao Song, Prabhu Babu, and Daniel~P Palomar.
\newblock {Sequence Design to Minimize the Weighted Integrated and Peak
  Sidelobe Levels}.
\newblock {\em accepted in IEEE Trans. on Signal Processing}, 2015.

\bibitem{SongBabuPalomar2016}
Junxiao Song, Prabhu Babu, and Daniel~P Palomar.
\newblock Sequence set design with good correlation properties via
  majorization-minimization.
\newblock {\em IEEE Transactions on Signal Processing}, 64(11):2866--2879,
  2016.

\bibitem{Stoica2009}
Petre Stoica, Hao He, and Jian Li.
\newblock {New algorithms for designing unimodular sequences with good
  correlation properties}.
\newblock {\em IEEE Transactions on Signal Processing}, 57(4):1415--1425, 2009.

\bibitem{Tropp2009}
A~Tropp, N~Halko, and P~G Martinsson.
\newblock {Finding structure with randomness: Stochastic algorithms for
  constructing approximate matrix decompositions}.
\newblock Technical report, 2009.

\bibitem{Turyn1961}
R~Turyn and J~Storer.
\newblock {On binary sequences}.
\newblock {\em Proceedings of the American Mathematical Society},
  12(3):394--399, 1961.

\bibitem{Welch1974}
Lloyd Welch.
\newblock Lower bounds on the maximum cross correlation of signals (corresp.).
\newblock {\em IEEE Transactions on Information theory}, 20(3):397--399, 1974.

\bibitem{Willsey2011a}
Matt~S Willsey, Kevin~M Cuomo, and Alan~V Oppenheim.
\newblock {Quasi-orthogonal wideband radar waveforms based on chaotic systems}.
\newblock {\em Aerospace and Electronic Systems, IEEE Transactions on},
  47(3):1974--1984, 2011.

\bibitem{Willsey2011}
Matt~S Willsey, Kevin~M Cuomo, and Alan~V Oppenheim.
\newblock {Selecting the Lorenz parameters for wideband radar waveform
  generation}.
\newblock {\em International Journal of Bifurcation and Chaos},
  21(09):2539--2545, 2011.

\bibitem{Woolfe2008}
Franco Woolfe, Edo Liberty, Vladimir Rokhlin, and Mark Tygert.
\newblock {A fast randomized algorithm for the approximation of matrices}.
\newblock {\em Applied and Computational Harmonic Analysis}, 25(3):335--366,
  2008.

\bibitem{Xiong2008}
Tingyao Xiong, Jonathan Hall, and Others.
\newblock {Construction of even length binary sequences with asymptotic merit
  factor 6}.
\newblock {\em Information Theory, IEEE Transactions on}, 54(2):931--935, 2008.

\bibitem{Xu2003}
Sheng Xu, Robert~M Freund, and Jie Sun.
\newblock {Solution methodologies for the smallest enclosing circle problem}.
\newblock {\em Computational Optimization and Applications}, 25(1-3):283--292,
  2003.

\bibitem{Zhuang2011}
Shanna Zhuang and Xiaohua Zhu.
\newblock {Improved design of unimodular waveforms for MIMO radar}.
\newblock {\em Multidimensional Systems and Signal Processing}, 24(3):447--456,
  2011.

\end{thebibliography}
	\vspace{-2.5pc}
	\begin{IEEEbiography}[{\includegraphics[bb=0mm 0mm 208mm 296mm, width=25mm, height=31.6mm, viewport=3mm 4mm 205mm 292mm]{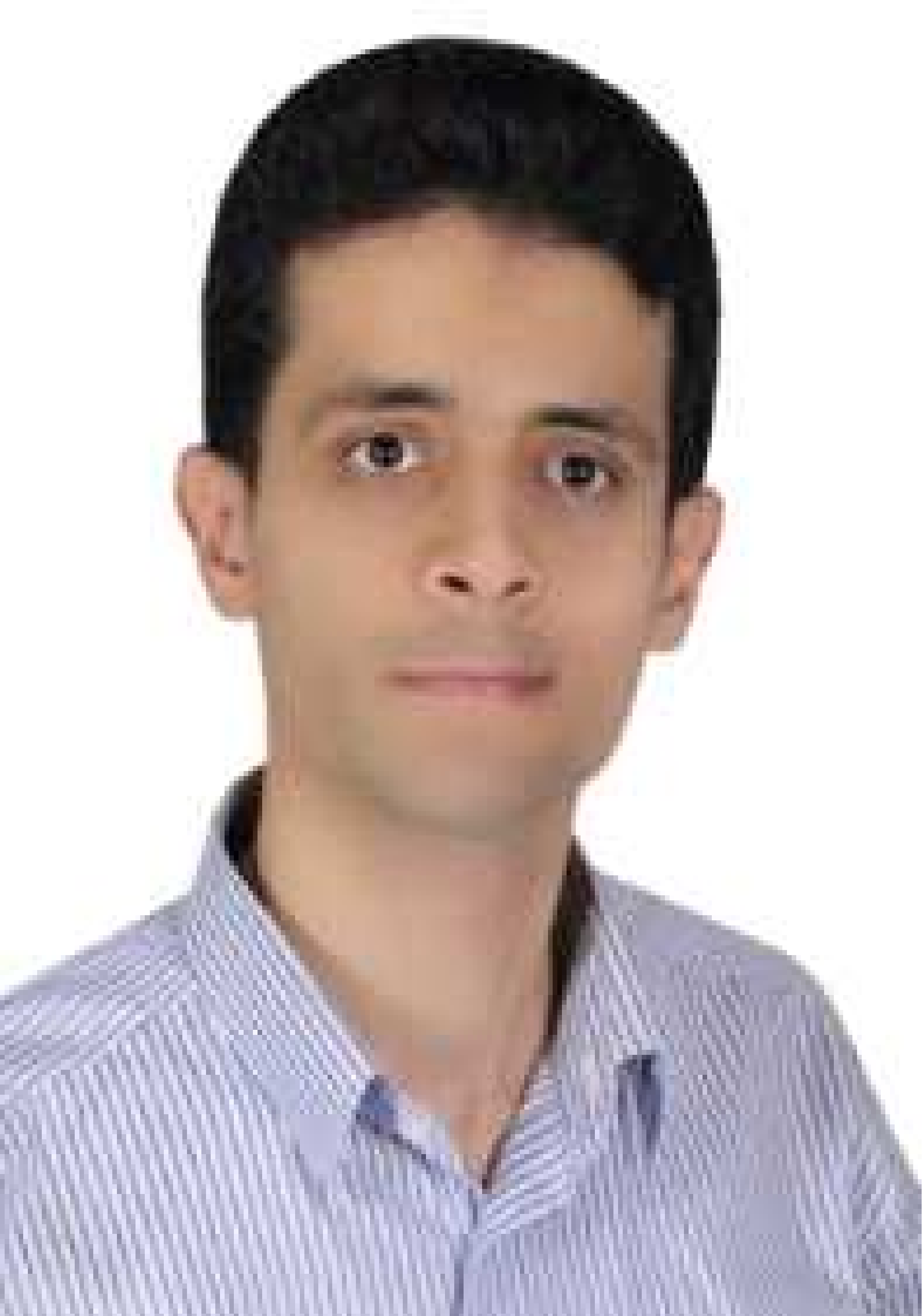}}]
		{H. E. Najafabadi received B.S. and M.S. in electronics, and communication in 2007 and 2009 from Isfahan University of Technology and Amirkabir University of Technology, respectively. From 2009 to 2010, he was with the ICTI group (icti.ir), where he was working on the national airborne radar system as a team member. Currently, He works toward Ph.D. in University of Isfahan.}
	\end{IEEEbiography}
	\vspace{-3pc}
	\begin{IEEEbiography}
		[{\includegraphics[bb=0mm 0mm 208mm 296mm, width=25mm, height=31.6mm, viewport=3mm 4mm 205mm 292mm]{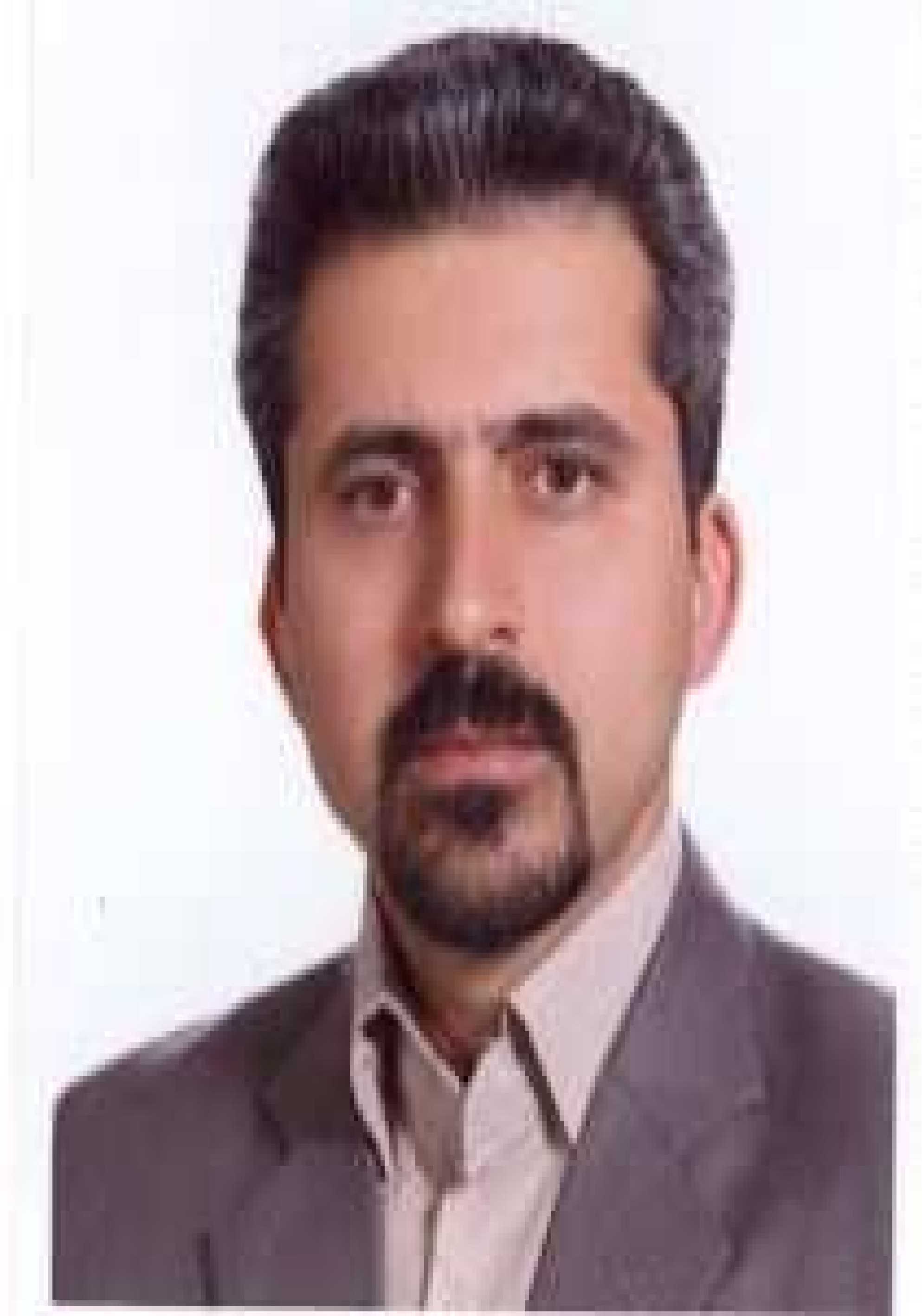}}]
		{\scriptsize M. Ataei received the B.Sc. degree from the Isfahan University of Technology, Iran, in 1994, the M. Sc. degree from the Iran University of Science \& Technology, Iran, in 1997, and PhD degree from K. N. University of Technology, Iran, in 2004 (joint project with the University of Bremen, Germany) all in control Engineering. He is an associate professor at the Department of Electrical Eng. at University of Isfahan, Iran. His main areas of research interest are control theory and applications, nonlinear control, and chaotic systems' analysis and control.}
	\end{IEEEbiography}
	\vspace{-3.5pc}
	\begin{IEEEbiography}
		[{\includegraphics[bb=0mm 0mm 208mm 296mm, width=25mm, height=31.6mm, viewport=3mm 4mm 205mm 292mm]{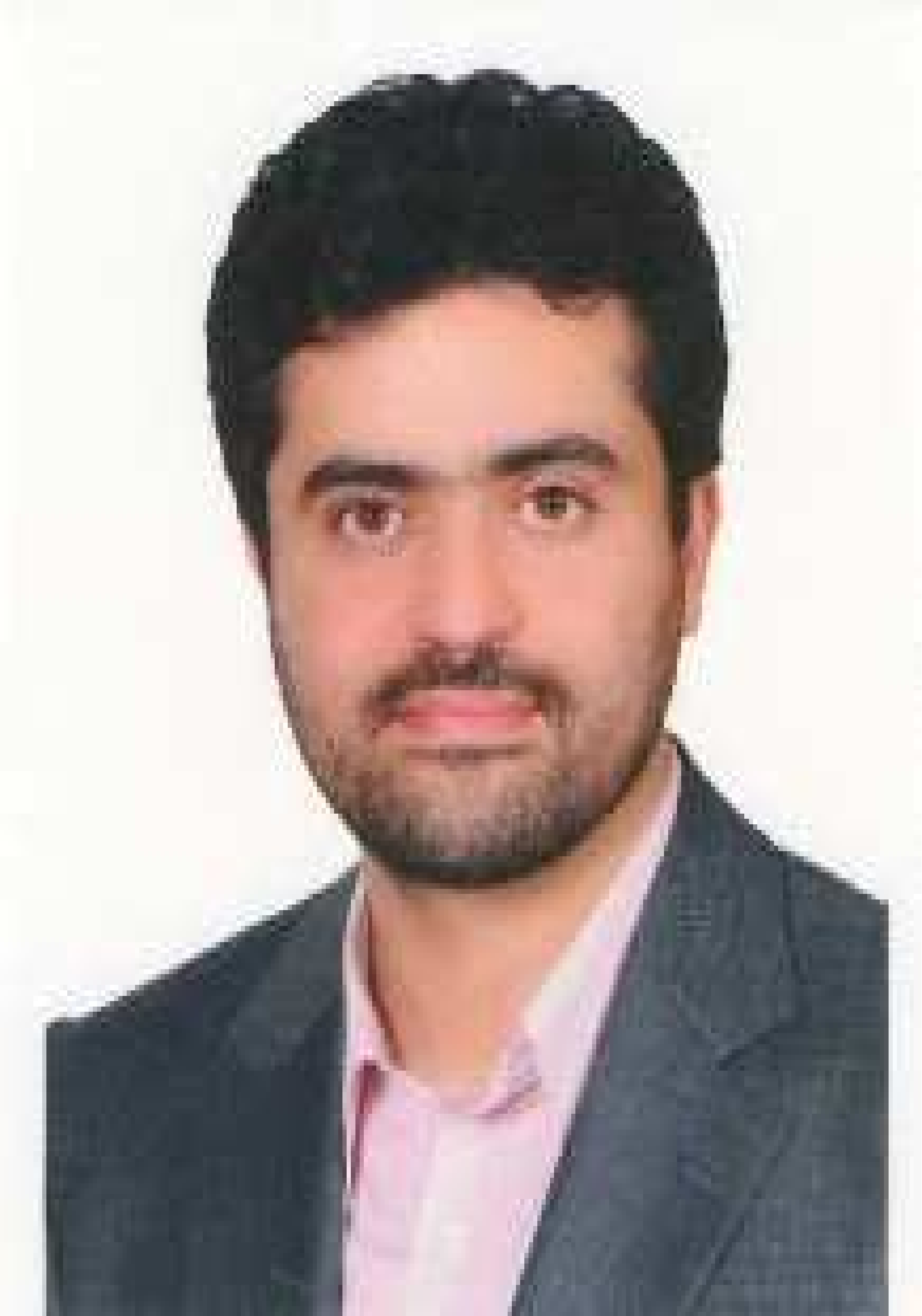}}]
		{\scriptsize M. F. Sabahi was born in Isfahan in 1976. He received B.S. and M.S. degrees, in 1998 and 2000, in Electronics Engineering and Communication Engineering, respectively, from Isfahan University of Technology; Isfahan, Iran. He also received the Ph.D. in Electrical Engineering from the same university in the year 2008. He has been a faculty member of Electrical Engineering Department, at the University of Isfahan from 2008 till now. His main research interests include Statistical Signal Processing, Detection Theory, and Wireless Communication.}
	\end{IEEEbiography}

\end{document}